\documentclass[runningheads]{llncs}

\usepackage[normalem]{ulem}
\newcommand{\add}[1]{#1}
\newcommand{\rem}[1]{}
\newcommand{\noted}[1]{}

\title{Automated Attacker Synthesis for Distributed Protocols}
\subtitle{\add{Expanded with Erratum}}
\author{
        Max von Hippel \Letter
        \and
        Cole Vick
        \and
        Stavros Tripakis \Letter
        \and
        Cristina Nita-Rotaru \Letter
}
\date{\today}
\authorrunning{von Hippel et al.}
\institute{Northeastern University, Boston MA 02118, USA\\
\email{\{vonhippel.m,stavros,c.nitarotaru\}@northeastern.edu}\\
\email{vick.c@husky.neu.edu}}

\let\proof\relax
\let\endproof\relax
\usepackage{
    graphicx,
    tikz,
    centernot,
    mathtools,
    adjustbox,
    fancyvrb,
    ifthen,
    times,
    float,
    multirow,
    tabularx,
    mdframed}
\newcolumntype{L}{>{\raggedright\arraybackslash}X}
\usepackage{url}
\usetikzlibrary{
    arrows,shapes,positioning,decorations.markings,
    decorations.pathreplacing,hobby}

\newenvironment{sketch}{%
  \proof}{\endproof}

\usepackage[font=footnotesize]{caption}

\DeclarePairedDelimiter{\abs}{\lvert}{\rvert}

\setlength{\abovecaptionskip}{1ex}
\setlength{\belowcaptionskip}{1ex}
\setlength{\floatsep}{1ex}
\setlength{\textfloatsep}{1ex}

\tikzset{
    looped/.style={
        decoration={markings,mark=at position 0.999 with {\arrow[scale=2]{>}}},
        postaction={decorate},
        >=stealth
    },
    straight/.style={
        decoration={markings,mark=at position 1 with {\arrow[scale=2]{>}}},
        postaction={decorate},
        >=stealth
    },
    loopedSF/.style={
        decoration={
            markings,
            mark=at position 0.999 with {\arrow[scale=2]{>}},
            mark=at position 0.5 with {\arrow[scale=2]{>}}},
        postaction={decorate},
        >=stealth
    },
    straightSF/.style={
        decoration={
            markings,
            mark=at position 0.999 with {\arrow[scale=2]{>}},
            mark=at position 0.5 with {\arrow[scale=2]{>}}},
        postaction={decorate},
        >=stealth
    },
    triangle/.style = {fill=white, draw=black, regular polygon, regular polygon sides=3 },
    node rotated/.style = {rotate=180},
    border rotated/.style = {shape border rotate=180}
}

\newcommand{\F}[0]{\mathbf{F}}
\newcommand{\U}[0]{\mathbf{U}}
\newcommand{\G}[0]{\mathbf{G}}
\newcommand{\X}[0]{\mathbf{X}}

\newcommand{\showtechreport}{1} 
\newcommand{\showfuturework}{0} 
\newcommand{\techreport}[2]{\ifthenelse{\equal{\showtechreport}{1}}{#1}{#2}}
\newcommand{\futurework}[1]{\ifthenelse{\equal{\showfuturework}{1}}{{\color{red}{#1}}}{}}

\techreport{
    \usepackage{fullpage,pifont,diagbox,xcolor,enumerate,amsthm,amsfonts,amssymb,amsmath}
    \newcommand{\xmark}{\ding{55}}
    \usepackage[most]{tcolorbox}
    \definecolor{lightgreen}{rgb}{0.56, 0.93, 0.56}
    \definecolor{moonstoneblue}{rgb}{0.45, 0.66, 0.76}
    \usepackage[ruled,vlined,linesnumbered]{algorithm2e}
    
    \SetCommentSty{mycommfont}
    \raggedbottom

}{}

\setcounter{totalnumber}{50}
\setcounter{topnumber}{50}
\setcounter{bottomnumber}{50}

\usepackage[misc]{ifsym}

\begin{document}
\maketitle

\begin{abstract}
Distributed protocols
should be robust to both benign malfunction (e.g. packet loss or delay) and attacks (e.g. message replay).
In this paper we take a formal approach to the automated synthesis of attackers, i.e. adversarial processes that can cause the protocol to malfunction.
Specifically, given a formal \textit{threat model} capturing the distributed protocol model and network topology,
as well as the placement, goals, and interface \techreport{(inputs and outputs)}{} of potential attackers, 
we automatically synthesize an attacker.
  We formalize four attacker synthesis problems - across attackers that always succeed versus those that sometimes fail, and attackers that may attack forever versus those that may not - and we propose algorithmic solutions to two of them.  
 We report on a prototype implementation called \textsc{Korg} and its application to TCP as a case-study.
  Our experiments show that \textsc{Korg} can automatically generate well-known attacks for TCP within seconds or minutes.
\keywords{Synthesis \and Security \and Distributed Protocols}
\end{abstract}

\techreport{}{\vspace{-1em}}
\section{Introduction}
\techreport{}{\vspace{-1em}}
Distributed protocols represent the fundamental communication backbone for all services
over the Internet.  Ensuring  the correctness and security of these protocols is critical
for the services built on top of them \cite{chong2016report}. 
Prior literature proposed different approaches to correctness assurance, e.g. testing \cite{Myers79,duran1981report}, or structural reasoning \cite{dijkstra1970notes}.
Many such approaches rely on manual analysis or are ad-hoc in nature.

In this paper, we take a systematic approach to the problem of security of distributed protocols, by using formal methods and synthesis \cite{alonzo57}.
Our focus is the automated generation of {\em attacks}.
But what exactly is an attack? The notion of an attack is often implicit in the
formal verification of security properties: it is a counterexample
violating some security specification. 
We build on this idea.
We provide a formal definition of {\em threat models} capturing the distributed protocol model and network topology, as well as the placement, goals, and capabilities of potential {\em attackers}.
\techreport{Note that an \emph{attacker goal} is simply the negation of a \emph{protocol property}, in the sense that the goal of an attacker is to violate desirable properties that the protocol must preserve.}{}
Intuitively, an attacker is a process that, when composed with the system, results a protocol property violation.

By formally defining attackers as processes, our approach has several benefits:
first, we can ensure that these processes are {\em executable}, meaning attackers are programs that reproduce attacks.
This is in contrast to other approaches that generate a {\em trace} exemplifying an attack, but not a program producing the attack, e.g. \cite{BlanchetCSFW01,NetSMC}.
Second, an explicit formal attacker definition allows us to distinguish different types of attackers, 
depending on: what exactly does it mean to violate a property (in some cases? in all cases?);
how the attacker can behave, etc.
We distinguish between
 $\exists$-attackers (that sometimes succeed in violating the security property) 
 and $\forall$-attackers (that always succeed); 
 and between attackers \emph{with recovery} (that eventually revert to normal system behavior) 
and attackers without (that may attack forever). We make four primary contributions.  
\begin{itemize}
\item We propose a novel formalization of threat models and attackers, where the threat models algebraically capture not only the attackers but also the attacker goals, the environmental and victim processes, and the network topology.  

\item We formalize four attacker synthesis problems -- $\exists$ASP, R-$\exists$ASP,  $\forall$ASP, R-$\forall$ASP --  one for each of the four combinations
of types of attackers.

\item We propose solutions for $\exists$ASP and R-$\exists$ASP via reduction to model-checking. 
The key idea of our approach is to replace the vulnerable processes - the victim(s) - by appropriate ``gadgets'', then ask a model-checker whether the resulting system violates a certain property.

\item We implement our solutions in a prototype open-source tool called \textsc{Korg}, 
and apply \textsc{Korg} to TCP connection establishment and tear-down routines.
Our experiments show \textsc{Korg} is able to automatically synthesize realistic, 
well-known attacks against TCP within seconds or minutes.
\end{itemize}

The rest of the paper is organized as follows. We present background material in 
\techreport{Section~}{$\mathsection$}\ref{sec:bk}. We define attacker synthesis problems in \techreport{Section~}{$\mathsection$}\ref{sec:pb} and
 present solutions in \techreport{Section~}{$\mathsection$}\ref{sec:sol}.
We describe the TCP case study in \techreport{Section~}{$\mathsection$}\ref{sec:tcp},
present related work in \techreport{Section~}{$\mathsection$}\ref{sec:relwork}, and conclude
in~\techreport{Section~}{$\mathsection$}\ref{sec:concl}.
\add{The solutions in \techreport{Section~}{$\mathsection$}\ref{sec:sol} are updated to fix a mathematical error in the previously published version of this document.  We explain that error in \techreport{Section~}{$\mathsection$}\ref{sec:err}.}

\section{Formal Model Preliminaries}\label{sec:bk}
\techreport{}{\vspace{-1em}}
We model distributed protocols as interacting processes,  in the spirit of \cite{SIGACT17}.
We next define formally these processes and their composition.  \techreport{We also define formally the specification language that we use, namely LTL.}{}  
We use $2^X$ to denote the power-set of $X$, and
$\omega$-exponentiation to denote infinite repetition, e.g., $a^\omega=a a a \cdots$.

\subsection{Processes}

\begin{sloppypar}
\begin{definition}[Process]
A \emph{process} is a tuple 
$P = \langle \text{AP}, I, O, S, s_0, T, L \rangle$ with set of \emph{atomic propositions} AP, 
set of \emph{inputs} $I$, set of \emph{outputs} $O$, set of \emph{states} $S$, 
\emph{initial state} $s_0 \in S$, \emph{transition relation} $T \subseteq S \times (I \cup O) \times S$, 
and (total) \emph{labeling function} $L : S \to 2^{\text{AP}}$, such that:
$\text{AP}, I, O$, and $S$ are finite; and $I \cap O = \emptyset$.
\end{definition}
\end{sloppypar}

\techreport{
In formal methods, {\em Kripke Structures} \cite{Kripke63} are commonly used to describe computer programs, because they are automata (and so well-suited to describing computer programs) and their states are labeled with atomic propositions (so they are well-suited to modal logic).  A {\em process} is just a Kripke Structure with {\em inputs} and {\em outputs}.  Using Kripke Structures allows us to leverage LTL for free, and separating messages into inputs and outputs allows us to describe network topologies entirely using just the interfaces of the interacting processes.  This idea is fundamental to our formalism of threat models in Section~\ref{sec:sol}.  We now explain the technical details of processes.}{}

Let $P = \langle \text{AP}, I, O, S, s_0, T, L \rangle$ be a process.  For each state~$s \in S$,~$L(s)$ is a subset of AP containing the atomic propositions that are true at state~$s$. 
Consider a transition~$(s,x,s')$ starting at state~$s$ and ending at state~$s'$ with label~$x$.
If the label $x$ is an input, then the transition is called an \emph{input transition} and denoted $s \xrightarrow[]{x?} s'$.  Otherwise,~$x$ is an output, and the transition is called an \emph{output transition} and denoted $s \xrightarrow[]{x!} s'$.  A transition $(s, x, s')$ is called \emph{outgoing} from state $s$ and \emph{incoming} to state~$s'$.

A state $s \in S$ is called a \emph{deadlock} iff it has no outgoing transitions.
\techreport{The state $s$ is called \emph{reachable} if either it is the initial state or there exists a sequence of transitions $\big( (s_i, x_i, s_{i+1}) \big)_{i = 0}^m \subseteq T$ starting at the initial state $s_0$ and ending at $s_{m + 1} = s$.  Otherwise, $s$ is called \emph{unreachable}.}{}
The state $s$ is called \emph{input-enabled} iff, for all inputs $x \in I$, there exists some state $s' \in S$ such that there exists a transition $(s, x, s') \in T$. 
We call $s$ an \emph{input} state (or \emph{output} sate) if all its outgoing transitions are input transitions (or output transitions, respectively).
\techreport{States with both outgoing input transitions and outgoing output transitions are neither input nor output states, while states with no outgoing transitions (i.e., deadlocks) are (vacuously) both input and output states.}{}

\techreport{Various definitions of process determinism exist; ours is a variation on that of \cite{SIGACT17}.}{}
A process $P$ is \emph{deterministic} iff all of the following hold: (i) its transition relation $T$ can be expressed as a \add{(possibly partial)} function $S \times (I \cup O) \to S$; (ii) every non-deadlock state in $S$ is either an input state or an output state, but not both; (iii) input states are input-enabled; and (iv) each output state has only one outgoing transition. 
Determinism guarantees that: each state is a deadlock, an input state, or an output state; when a process outputs, its output is uniquely determined by its state; and when a process inputs, the input and state uniquely determine where the process transitions.  
\techreport{More intuitively, deterministic processes can be translated into concrete programs in languages like \textsc{C} or \textsc{Java}.  Determinism therefore helps us make our attackers realistic.}{}

A \emph{run} of a process $P$ is an infinite sequence $r = \big( (s_i, x_i, s_{i+1}) \big)_{i = 0}^{\infty} \subseteq T^{\omega}$ of consecutive transitions.  We use $\text{runs}(P)$ to denote all the runs of $P$.  A run over states $s_0,s_1,...$ induces a sequence of labels $L(s_0),L(s_1),...$ called a \textit{computation}.  
\techreport{Given a (zero-indexed) sequence $\nu$, we let $\nu[i]$ denote the $i^{\text{th}}$ element of $\nu$; $\nu[i:j]$, where $i \leq j$, denote the finite infix $(\nu[t])_{t = i}^{j}$; and $\nu[i:]$ denote the infinite postfix $(\nu[t])_{t = i}^{\infty}$; we will use this notation for runs and computations.

Given two processes \(P_i = \langle \text{AP}_i, I_i, O_i, S_i, s_0^i, T_i, L_i \rangle\) for $i = 1, 2$, we say that $P_1$ is a {\em subprocess} of $P_2$, denoted $P_1 \subseteq P_2$, if $\text{AP}_1 \subseteq \text{AP}_2, I_1 \subseteq I_2, O_1 \subseteq O_2, S_1 \subseteq S_2, T_1 \subseteq T_2,$ and, for all $s \in S_1$, $L_1(s) \subseteq L_2(s)$.
}{}

\subsection{Composition}
The composition of two processes $P_1$ and $P_2$ is another process denoted $P_1 \parallel P_2$, capturing both the individual behaviors of $P_1$ and $P_2$ as well as their interactions with one another\techreport{ (e.g. Fig.~\ref{exampleComposition}).}{.}  We define the asynchronous parallel composition operator $\parallel$ with rendezvous communication as in \cite{SIGACT17}.
\begin{definition}[Process Composition]
Let \(P_i = \langle \text{AP}_i, I_i, O_i, S_i, s_0^i, T_i, L_i \rangle\) be processes,
for $i = 1, 2$.  For the composition of $P_1$ and $P_2$ 
(denoted $P_1 \parallel P_2$) to be well-defined,
\techreport{
	the processes must have no common outputs,
	i.e., $O_1 \cap O_2 = \emptyset$, 
	and no common atomic propositions,
	i.e., $\text{AP}_1 \cap \text{AP}_2 = \emptyset$.

}{
	the processes must have no common outputs, 
	and no common atomic propositions.
}
Then $P_1 \parallel P_2$ is defined below:
\begin{equation}
P_1 \parallel P_2 
= \langle 
\text{AP}_1 \cup \text{AP}_2, 
(I_1 \cup I_2) \setminus (O_1 \cup O_2), 
O_1 \cup O_2, 
S_1 \times S_2, 
(s_0^1, s_0^2), 
T, 
L \rangle
\end{equation}
... where the transition relation $T$ is precisely the set of transitions $(s_1, s_2) \xrightarrow[]{x} (s_1', s_2')$ such that, for $i = 1, 2$, if the label $x \in I_i \cup O_i$ is a label of $P_i$, then $s_i \xrightarrow[]{x} s_i' \in T_i$, else $s_i = s_i'$.  $L : S_1 \times S_2 \to 2^{\text{AP}_1 \cup \text{AP}_2}$ is the function defined as $L(s_1, s_2) = L_1(s_1) \cup L_2(s_2)$.
\end{definition}

\techreport{Intuitively, we define process composition to capture two primary ideas: (1) \emph{rendezvous communication}, meaning that a message is sent at the same time that it is received, and (2) \emph{multi-casting}, meaning that a single message could be sent to multiple parties at once. We can use so-called \emph{channel} processes to build asynchronous communication out of rendezvous communication (as we do in Section~\ref{sec:tcp}), and we can easily preclude multi-casting by manipulating process interfaces.  Our definition therefore allows for a variety of communication models, making it flexible for diverse research problems.  We next explain and illustrate the technical details.}{}

\techreport{A state of the composite process $P_1 \parallel P_2$ is a pair $(s_1, s_2)$ consisting of a state $s_1 \in S_1$ of $P_1$ and a state $s_2 \in S_2$ of $P_2$.  The initial state of $P_1 \parallel P_2$ is a pair $(s_0^1, s_0^2)$ consisting of the initial state $s_0^1$ of $P_1$ and the initial state $s_0^2$ of $P_2$.  The inputs of the composite process are all the inputs of $P_1$ that are not outputs of $P_2$, and all the inputs of $P_2$ that are not outputs of $P_1$.  The outputs of the composite process are the outputs of the individual processes.
$P_1 \parallel P_2$ has three kinds of transitions $(s_1, s_2) \xrightarrow[]{z} (s_1', s_2')$.  In the first case, $P_1$ may issue an output $z$.  If this output $z$ is an input of $P_2$, then $P_1$ and $P_2$ move simultaneously and $P_1 \parallel P_2$ outputs $z$.  Otherwise, $P_1$ moves, outputting $z$, but $P_2$ stays still (so $s_2 = s_2'$).  The second case is symmetric to the first, except that $P_2$ issues the output.  In the third case, $z$ is neither an output for $P_1$ nor for $P_2$.  If $z$ is an input for both, then they synchronize.  Otherwise, whichever process has $z$ as an input moves, while the other stays still.}{}

\techreport{
\begin{figure}[H]
\centering
\begin{adjustbox}{max totalsize={1.0\textwidth}{.5\textheight}}
\begin{tikzpicture}
\node[] (empty) {};
\node[draw,circle] (s0) [above=of empty] {\large $s_0 : \emptyset$};
\node[draw,circle] (s1) [right=of s0] {\large $\begin{aligned}s_1:\\\{ p, q \}\end{aligned}$};
\node[draw,circle] (s2) [right=of s1] {\large $\begin{aligned}s_2:\\\{ q \}\end{aligned}$};
\draw[straight] (empty) to (s0);
\draw[straight] (s2) to[above] node {\large $x?$} (s1);
\draw[looped] (s0) to[above,out=north east,in=north west,looseness=1] node {\large $x?$} (s1);
\draw[looped] (s0) to[below,out=south east,in=south west,looseness=1] node {\large $z?$} (s1);
\draw[looped] (s2) to[above,out=north west,in=north east,looseness=4] node {\large $v!$} (s2);
\draw[looped] (s0) to[above,out=north east,in=north west,looseness=4] node {\large $w!$} (s0);

\node[] (comp) [right=of s2] {\Huge $\parallel$};

\node[draw,circle] (q0) [right=of comp] {\large $q_0 : \{ r \}$};
\node[] (empty2) [below=of q0] {};
\node[draw,circle] (q1) [right=of q0] {\large $q_1 : \emptyset$};
\draw[straight] (empty2) to (q0);
\draw[looped] (q0) to[out=north east,in=north west,above] node {\large $x!$} (q1);
\draw[looped] (q1) to[out=south west,in=south east,below] node {\large $w?$} (q0);
\draw[looped] (q1) to[out=north east,in=south east,right,looseness=4] node {\large $m!$} (q1);

\node[] (equals) [right=of q1] {\Huge \, $=$};

\node[] (emptyFinal) [above=of equals] {};

\node[draw,circle] (s0q0) [right=of emptyFinal] 
	{\large $\begin{aligned}(s_0,q_0):\\\{ r \}\end{aligned}$};
\node[draw,circle] (s1q0) [right=of s0q0]
	{\large $\begin{aligned}(s_1,q_0):\\\{ p, q, r \}\end{aligned}$};
\node[draw,circle] (s2q0) [right=of s1q0]
	{\large $\begin{aligned}(s_2,q_0):\\\{ q, r \}\end{aligned}$};
\node[draw,circle] (s0q1) [below=of s0q0]
	{\large $\begin{aligned}(s_0,q_1):\\\emptyset\end{aligned}$};
\node[draw,circle] (s1q1) [right=of s0q1]
	{\large $\begin{aligned}(s_1,q_1):\\\{ p, q \}\end{aligned}$};
\node[draw,circle] (s2q1) [right=of s1q1]
	{\large $\begin{aligned}(s_2,q_1):\\\{ q \}\end{aligned}$};

\draw[straight] (emptyFinal) to (s0q0);

\draw[straight] (s0q0) to[above] node {\large $z?$} (s1q0);
\draw[straight] (s0q0) to[above] node {\large $\, \, x!$} (s1q1);
\draw[looped] (s2q0) to[out=north east,in=south east,looseness=4,right] node {\large $v!$} (s2q0);
\draw[straight] (s0q1) to[left] node {\large $w!$} (s0q0);
\draw[looped] (s2q1) to[out=north,in=east,looseness=4,right] node {\large $v!$} (s2q1);
\draw[straight] (s2q0) to[above] node {\large $x! \, \,$} (s1q1);
\draw[straight] (s0q1) to[above] node {\large $z?$} (s1q1);

\draw[looped] (s0q1) to[out=south east,in=south west,below,looseness=4] node {\large $m!$} (s0q1);
\draw[looped] (s1q1) to[out=south east,in=south west,below,looseness=4] node {\large $m!$} (s1q1);
\draw[looped] (s2q1) to[out=south east,in=south west,below,looseness=4] node {\large $m!$} (s2q1);
\end{tikzpicture}
\end{adjustbox}
\caption{Left is a process $P$ with atomic propositions $\text{AP} = \{ p, q \}$, inputs $I = \{ x, z \},$ outputs $O = \{ v, w \},$ states $S = \{ s_0, s_1, s_2 \},$ transition relation $T = \{ (s_0, w, s_0), (s_0, x, s_1), (s_0, z, s_1), (s_2, x, s_1), (s_2, v, s_2) \},$ and labeling function $L$ where $L(s_0) = \emptyset$, $L(s_1) = \{ p, q \},$ and $L(s_2) = \{ q \}$.  
Center is a process $Q = \langle \{ r \}, \{ w \}, \{ x, m \}, \{ q_0, q_1 \}, q_0, \{ (q_0, x, q_1), (q_1, m, q_1), (q_1, w, q_0) \}, L_Q \rangle$ where $L_Q(q_0) = \{ r \}$ and $L_Q(q_1) = \emptyset$.  Processes $P$ and $Q$ have neither common atomic propositions ($\{ p, q \} \cap \{ r \} = \emptyset$), nor common outputs ($\{ w, v \} \cap \{ x, m \} = \emptyset$), so the composition $P \parallel Q$ is well-defined.  Right is the process $P \parallel Q$.  Although $P \parallel Q$ is rather complicated, its only reachable states are $(s_0, q_0), (s_1, q_0),$ and $(s_1, q_1)$, and its only run is $r = \big( (s_0, q_0), x, (s_1, q_1) \big), \big( (s_1, q_1), m, (s_1, q_1) \big)^{\omega}$.  Non-obviously, the only computation of $P \parallel Q$ is $\sigma = \{ r \}, \{ p, q \}^{\omega}$.}
\label{exampleComposition}
\end{figure}
}{}

\techreport{Note that sometimes rendezvous composition is defined to match $s_1 \xrightarrow[]{z?} s_1'$ with $s_2 \xrightarrow[]{z!} s_2'$ to form a {\em silent} transition $(s_1, s_2) \xrightarrow[]{} (s_1', s_2')$, but with our definition the output is preserved, so the composite transition would be $(s_1, s_2) \xrightarrow[]{z!} (s_1', s_2')$.  This allows for \emph{multi-casting}, where an output event of one process can synchronize with multiple input events from multiple other processes.  It also means there are no silent transitions.}{}

The labeling function $L$ is total as $L_1$ and $L_2$ are total. Since we required the processes $P_1, P_2$ to have disjoint sets of atomic propositions, $L$ does not change the logic of the two processes under composition.
Additionally, $\parallel$ is commutative and associative \cite{SIGACT17}.

\subsection{LTL} 
\techreport{
	\emph{LTL} \cite{IEEEASFCS77} is a linear temporal logic for reasoning about computations.  In this work, we use LTL to formulate properties of processes.  The syntax of LTL is defined by the following grammar:
	\begin{equation}
	\phi ::= \underbrace{p \mid q \mid ...}_{\in \text{AP}} \mid \phi_1 \land \phi_2 \mid \neg \phi_1 \mid \X \phi_1 \mid \phi_1 \U \phi_2
	\end{equation}
	... where $p, q, ... \in \text{AP}$ can be any atomic propositions, and $\phi_1, \phi_2$ can be any LTL formulae.  Let $\sigma$ be a computation of a process $P$.  If an LTL formula $\phi$ is true about $\sigma$, we write $\sigma \models \phi$.  On the other hand, if $\neg ( \sigma \models \phi )$, then we write $\sigma \centernot{\models} \phi$.  The semantics of LTL with respect to $\sigma$ are as follows.
	\begin{equation}
	\begin{array}{lcl}
	\sigma \models p & \text{ iff } & p \in \sigma[0] \\
	\sigma \models \phi_1 \land \phi_2 & \text{ iff } & \sigma \models \phi_1 \text{ and } \sigma \models \phi_2 \\
	\sigma \models \neg \phi_1 & \text{ iff } & \sigma \centernot{\models} \phi_1 \\
	\sigma \models \X \phi_1 & \text{ iff } & \sigma[1:] \models \phi_1 \\
	\sigma \models \phi_1 \U \phi_2 & \text{ iff } & \big( \exists \, K \geq 0 \, : \, \sigma[K:] \models \phi_2 \text{, and }  \\
									&              & \, \, \, \forall \, 0 \leq j < K \, : \, \sigma[j:] \models \phi_1 \big) \\
	\end{array}
	\end{equation}

	Essentially, $p$ holds iff it holds at the first step of the computation; the conjunction of two formulae holds if both formulae hold; the negation of a formula holds if the formula does not hold; $\X \phi_1$ holds if $\phi_1$ holds in the next step of the computation; and $\phi_1 \U \phi_2$ holds if $\phi_2$ holds at some future step of the computation, and until then, $\phi_1$ holds.  Standard syntactic sugar include $\lor$, \textbf{true}, \textbf{false}, $\F$, $\G$, and $\to$.  For all LTL formulae $\phi_1, \phi_2$ and atomic propositions~$p \in \text{AP}$: $\phi_1 \lor \phi_2 \equiv \neg (\neg \phi_1 \land \neg \phi_2)$; $\textbf{true} \equiv p \lor \neg p$; $\textbf{false} \equiv \neg \textbf{true}$; $\F \phi_1 \equiv \textbf{true} \U \phi_1$; $\G \phi_1\equiv \neg \F \neg \phi_1$; and $\phi_1 \to \phi_2 \equiv (\neg \phi_1) \lor (\phi_1 \land \phi_2)$.  

	Example formulae include: 
	\begin{itemize}
		\item Lunch will be ready in a moment: $\X \texttt{lunch-ready}$.
		\item I always eventually sleep: $\G \F \texttt{sleep}$.
		\item I am hungry until I eat: $\texttt{hungry} \U \texttt{eat}$.
		\item $A$ and $B$ are never simultaneously in their \texttt{crit} states: $\G \neg (\texttt{crit}_A \land \texttt{crit}_B)$.
	\end{itemize}
}{
	\emph{LTL} \cite{IEEEASFCS77} is a linear temporal logic for reasoning about computations.  In this work, we use LTL to formulate properties of processes.  The syntax of LTL is defined by the following grammar: \(
	\phi ::= p \mid q \mid ... \mid \phi_1 \land \phi_2 \mid \neg \phi_1 \mid \X \phi_1 \mid \phi_1 \U \phi_2\), where the $p \mid q \mid ...$ are any atomic propositions $\in \text{AP}$, and $\phi_1, \phi_2$ can be any LTL formulae.

	Let $\sigma$ be a computation.  If $\sigma$ satisfies an LTL formula $\phi$ we write $\sigma \models \phi$.  If $\neg ( \sigma \models \phi )$, then we write $\sigma \centernot{\models} \phi$.  The satisfaction relation for LTL is formally defined as follows:
	$\sigma \models p$ if $p$ is true in $\sigma(0)$; $\sigma \models \X p$ if $p$ is true in $\sigma(1)$; $\sigma \models \F p$ if there exists some $K \geq 0$ such that $p$ is true in $\sigma(K)$; $\sigma \models \G p$ if for all $K \geq 0$, $p$ is true in $\sigma(K)$; $\sigma \models p \U q$ if there exists some $K \geq 0$ such that for all $k_1 < K \leq k_2$, $p$ is true in $\sigma(k_1)$ and $q$ is true in $\sigma(q_2)$; and $\sigma \models \phi_1 \land \phi_2$ if $\sigma \models \phi_1$ and $\sigma \models \phi_2$.
}

An LTL formula $\phi$ is called a \emph{safety property} iff it can be violated by a finite prefix of a computation, or a \emph{liveness property} iff it can only be violated by an infinite computation \cite{BaierKatoenBook}.  For a process $P$ and LTL formula $\phi$, we write $P \models \phi$ iff, for every computation $\sigma$ of $P$, $\sigma \models \phi$.  For convenience, we naturally elevate our notation for satisfaction on computations to satisfaction on runs\techreport{, that is, for a run $r$ of a process $P$ inducing a computation $\sigma$, we write $r \models \phi$ and say ``$r$ satisfies $\phi$" iff $\sigma \models \phi$, or write $r \centernot{\models} \phi$ and say ``$r$ violates $\phi$" iff $\sigma \centernot{\models} \phi$.}{.}

\techreport{}{\vspace{-1em}}

\section{Attacker Synthesis Problems}\label{sec:pb}
\techreport{}{\vspace{-1em}}
We want to synthesize attackers automatically. 
Intuitively, an attacker is a process that, when composed with the system, violates some property.  
There are different types of attackers, depending on what it means to violate a property (in some cases? in all cases?), as well as on the system topology (threat model).
Next, we define the threat model and attacker concepts formally, followed by the problems considered in this paper.

\subsection{Threat Models}

A \emph{threat model} or \emph{attacker model} prosaically captures the goals and capabilities of an attacker with respect to some victim and environment.  
\techreport{Algebraically, it is difficult to capture the attacker goals and capabilities without also capturing the victim and the environment, so our abstract threat model includes all of the above.}{}
Our threat model captures: how many attacker components there are; how they communicate with each other and with the rest of the system\techreport{: what messages they can intercept, transmit, etc;}{;} and the attacker goals.
\techreport{We formalize the concept of a threat model in what follows.}{}

\begin{definition}[Input-Output Interface]
	An \emph{input-output interface} is a tuple $(I, O)$ such that $I \cap O = \emptyset$ and $I \cup O \neq \emptyset$.  The \emph{class} of an input-output interface $(I, O)$, denoted $\mathcal{C}(I, O)$, is the set of processes with inputs $I$ and outputs $O$.  Likewise, $\mathcal{C}(P)$ denotes the input-output interface the process $P$ belongs to.  (e.g. Fig.~\ref{exampleAttackers}) 
\end{definition}

\begin{sloppypar}
\begin{definition}[Threat Model]
A \emph{threat model} is a tuple $(P, (Q_i)_{i = 0}^m, \phi)$ where $P,Q_0,...,Q_m$ are processes,
each process $Q_i$ has no atomic propositions (i.e., its set of atomic propositions is empty),
and $\phi$ is an LTL formula such that $P \parallel Q_0 \parallel ... \parallel Q_m \models \phi$.
We also require that the system $P \parallel Q_0 \parallel ... \parallel Q_m$ satisfies the formula $\phi$ in a non-trivial manner, that is, that $P \parallel Q_0 \parallel ... \parallel Q_m$ has at least one infinite run.
\end{definition}
\end{sloppypar}
In a threat model,
the process $P$ is called the \emph{target process}, and the processes $Q_i$ are called \emph{vulnerable processes}.  The goal of the adversary is to modify the vulnerable processes $Q_i$ so that composition with the target process violates $\phi$.  (We assume that prior to the attack, the protocol behaves correctly, i.e., it satisfies $\phi$.) \techreport{For example, in $\textsc{TM}_1$ of Fig.~\ref{exampleThreatModels}, the target process is Alice composed with Bob, and the vulnerable processes are Oscar and Trudy, while in $\textsc{TM}_5$, the target process is the composition of Jacob, Simon, Sophia, and Juan, and the only vulnerable process is Isabelle.}{See Fig.~\ref{exampleThreatModels}.}

\techreport{
	\begin{figure}[H]
}{
	\begin{figure}
}
\centering
\begin{adjustbox}{max totalsize={.99\textwidth}{.99\textheight},center}
\begin{tikzpicture}
\draw[draw=black] (-0.5,-0.5) rectangle ++(6.5, 2.5);
\draw[draw=black] (0, 0) rectangle ++(1.5,1.5);
\node[] (AliceBL) at (0.75,0.75) {\large Alice};
\draw[draw=black,dashed] (2, 0) rectangle ++(1.5,1.5);
\node[] (MalloryBL) at (2.75,0.75) {\large Mallory};
\draw[draw=black] (4, 0) rectangle ++(1.5,1.5);
\node[] (BobBL) at (4.75,0.75) {\large Bob};
\draw[straight] (1.5,0.5) to (2,0.5);
\draw[straight] (2,1)     to (1.5,1);
\draw[straight] (3.5,0.5) to (4,0.5);
\draw[straight] (4,1)     to (3.5,1);
\node[draw,rectangle,fill=white,double] (BLlabel) at (2.8,2.1) {\large $\text{TM}_3 = (\text{Alice} \parallel \text{Bob}, (\text{Mallory}), \phi_3)$};
\draw[draw=black] (6.5,-0.5) rectangle ++(6.5, 2.5);
\draw[draw=black] (7, 0) rectangle ++(1.5,1.5);
\node[] (AliceBR) at (7.75,0.75) {\large Alice};
\draw[draw=black,dashed] (9, 0) rectangle ++(1.5,0.75);
\node[] (EveBR) at (9.75,0.4) {\large Eve};
\draw[draw=black] (11, 0) rectangle ++(1.5,1.5);
\node[] (MarkBR) at (11.75,0.75) {\large Mark};
\draw[straight] (8.5,1)  to (11,1);
\draw[straight] (11,0.5) to (10.5,0.5);
\draw[straight] (9,0.5) to (8.5,0.5);
\node[draw,rectangle,fill=white,double] (BRlabel) at (9.8,2.1) {\large $\text{TM}_4 = (\text{Alice} \parallel \text{Mark}, (\text{Eve}), \phi_4)$};
\draw[draw=black] (6.5,2.5) rectangle ++(6.5, 2.5);
\draw[draw=black] (7, 3) rectangle ++(1.5,1.5);
\node[] (AliceTR) at (7.75,3.75) {\large Alice};
\draw[draw=black,dashed] (11, 3) rectangle ++(1.5,1.5);
\node[] (BobTR) at (11.75,3.75) {\large Oscar};
\draw[straight] (8.5,4.2) to (11,4.2);
\draw[straight] (11,3.3) to (8.5,3.3);  
\node[draw,rectangle,fill=white,double] (TRlabel) at (9.8,5.1) {\large $\text{TM}_2 = (\text{Alice}, (\text{Oscar}), \phi_2)$};
\draw[draw=black] (-0.5,2.5) rectangle ++(6.5, 2.5);
\draw[draw=black] (0, 3) rectangle ++(1.5,1.5);
\node[] (AliceTL) at (0.75,3.75) {\large Alice};
\draw[draw=black,dashed] (2, 3) rectangle ++(1.5,0.65);
\node[] (TrudyTL) at (2.75,3.3) {\large Trudy};
\draw[draw=black,dashed] (2, 3.86) rectangle ++(1.5,0.65);
\node[] (OscarTL) at (2.75,4.16) {\large Oscar};
\draw[draw=black] (4, 3) rectangle ++(1.5,1.5);
\node[] (BobTL) at (4.75,3.75) {\large Bob};
\draw[straight] (1.5,3.3) to (2,3.3);
\draw[straight] (3.5,3.3) to (4,3.3);
\draw[straight] (2,4.2)   to (1.5,4.2);
\draw[straight] (4,4.2)   to (3.5,4.2);
\node[draw,rectangle,fill=white,double] (TLlabel) at (2.8,5.1) {\large $\text{TM}_1 = (\text{Alice} \parallel \text{Bob}, (\text{Oscar}, \text{Trudy}), \phi_1)$};

\draw[draw=black] (-0.5 + 14,-6 + 5.5) rectangle ++(9.5, 5.5);

\draw[draw=black] (0 + 14, -5.5 + 5.5) rectangle ++(1.5,1.5);
\node[] (Simon5) at (0+0.75 + 14, -5.5+0.75 + 5.5) {\large Simon};

\draw[draw=black] (0 + 14, -3 + 5.5) rectangle ++(1.5,1.5);
\node[] (Jacob5) at (0+0.75 + 14,-3+0.75 + 5.5) {\large Jacob};

\draw[draw=black] (11-4 + 14, -5.5 + 5.5) rectangle ++(1.5,1.5);
\node[] (Juan5) at (11+0.75-4 + 14,-5.5+0.75 + 5.5) {\large Juan};

\draw[draw=black] (11-4 + 14, -3 + 5.5) rectangle ++(1.5,1.5);
\node[] (Sophia5) at (11+0.75-4 + 14,-3+0.75 + 5.5) {\large Sophia};

\draw[draw=black,dashed] (5-2 + 14, -4.5 + 5.5) rectangle ++(2.5, 2);
\node[] (Isabelle) at (6.25-2 + 14, -3.5 + 5.5) {\large Isabelle};

\draw[straight] (1.5 + 14, -5.5+0.2 + 5.5) to (11-4 + 14, -5.5+0.2 + 5.5);
\draw[straight] (5-2 + 14, -4.5 + 0.15 + 5.5) to (1.5 + 14, -5.5 + 0.4 + 5.5);
\draw[straight] (1.5 + 14, -5.5 + 1.5 - 0.4 + 5.5) to (5-2 + 14, -4.5 + 0.15 + 0.7 + 5.5);

\draw[straight] (5-2 + 14, -4.5 + 2 - 0.15 + 5.5) to (1.5 + 14, -3 + 1.5 - 0.4 + 5.5);
\draw[straight] (1.5 + 14, -3 + 0.4 + 5.5) to (5-2 + 14, -4.5 + 2 - 0.15 - 0.7 + 5.5);

\draw[straight] (5 + 2.5 - 2 + 14, -4.5 + 0.15 + 5.5) to (11 - 4 + 14, -5.5 + 0.4 + 5.5);
\draw[straight] (11 - 4 + 14, -5.5 + 1.5 - 0.4 + 5.5) to (5 + 2.5 - 2 + 14, -4.5 + 0.15 + 0.7 + 5.5);

\draw[straight] (5 + 2.5 - 2 + 14, -4.5 + 2 - 0.15 + 5.5) to (11 - 4 + 14, -3 + 1.5 - 0.4 + 5.5);
\draw[straight] (11 - 4 + 14, -3 + 0.4 + 5.5) to (5 + 2.5 - 2 + 14, -4.5 + 2 - 0.15 - 0.7 + 5.5);

\node[draw,rectangle,fill=white,double] (TL5Label) at (4 + 14.25,-1 + 6)
	{\large $\textsc{TM}_5 = (\text{Jacob} \parallel \text{Simon} \parallel \text{Sophia} \parallel \text{Juan}, (\text{Isabelle}), \phi_5)$};

\end{tikzpicture}
\end{adjustbox}
\caption{Example Threat Models.  The properties $\phi_i$ are not shown.  Solid and dashed boxes are processes; we only assume the adversary can exploit the processes in the dashed boxes.  $\text{TM}_1$ describes a distributed on-path attacker scenario, $\text{TM}_2$ describes an off-path attacker, $\text{TM}_3$ is a classical man-in-the-middle scenario, and $\text{TM}_4$ describes a one-directional man-in-the middle, or, depending on the problem formulation, an eavesdropper.  $\textsc{TM}_5$ is a threat model with a distributed victim where the attacker cannot affect or read messages from Simon to Juan.  Note that a directed edge in a network topology from Node 1 to Node 2 is logically equivalent to the statement that a portion of the outputs of Node 1 are also inputs to Node 2.  In cases where the same packet might be sent to multiple recipients, the sender and recipient can be encoded in a message subscript.  Therefore, the entire network topology is {\em implicit} in the interfaces of the processes in the threat model according to the composition definition.}
\label{exampleThreatModels}
\end{figure}

\subsection{Attackers}

\begin{definition}[Attacker]
Let $\textsc{TM} = (P, (Q_i)_{i = 0}^m, \phi)$ be a threat model.  Then $\vec{A} = (A_i)_{i = 0}^m$ is called a \emph{$\textsc{TM}$-attacker} if $P \parallel A_0 \parallel ... \parallel A_m \centernot{\models} \phi$, and, for all $0 \leq i \leq m$: $A_i$ is a deterministic process; $A_i$ has no atomic propositions, and $A_i \in \mathcal{C}(Q_i)$.
\end{definition}

The existence of a $(P, (Q_i)_{i = 0}^m, \phi)$-attacker means that if an adversary can exploit all the $Q_i$, then the adversary can attack $P$ with respect to $\phi$.
Note that an attacker $\vec{A}$ cannot succeed by blocking the system from having any runs at all.
Indeed, 
$P \parallel A_0 \parallel ... \parallel A_m \centernot{\models} \phi$ implies that
$P \parallel A_0 \parallel ... \parallel A_m$ has at least one infinite run violating $\phi$.

Real-world computer programs implemented in languages like \textsc{C} or \textsc{Java} are called \emph{concrete}, while logical models of those programs implemented as algebraic transition systems such as processes are called \emph{abstract}.  The motivation for synthesizing abstract attackers is ultimately to recover exploitation strategies that actually work against concrete protocols.  So, we should be able to translate an abstract attacker (Fig.~\ref{exampleAttackers}) into a concrete one (Fig.~\ref{fig:EattackerFiniteTM2code}). Determinism guarantees that we can do this.  We also require the attacker and the vulnerable processes to have no atomic propositions, so the attacker cannot ``cheat" by directly changing the truth-hood of the property it aims to violate.

\techreport{
We can define an attacker for many properties at once by conjoining those properties (e.g. $\phi_1 \land \phi_2 \land \phi_3$), or for many processes at once by composing those processes (e.g. $P_1 \parallel P_2 \parallel P_3$).  We therefore do not lose expressibility compared to a definition that explicitly allows many properties or processes.}{}

For a given threat model many attackers may exist.  We want to differentiate attacks that are more effective from attacks that are less effective.  One straightforward comparison is to partition attackers into those that always violate $\phi$, and those that only sometimes violate $\phi$.  We formalize this notion with $\exists$-attackers and $\forall$-attackers.

\begin{definition}[$\exists$-Attacker vs $\forall$-Attacker]
Let $\vec{A}$ be a $(P, (Q_i)_{i = 0}^m, \phi)$-attacker.  Then $\vec{A}$ is a \emph{$\forall$-attacker} if $P \parallel A_0 \parallel ... \parallel A_m \models \neg \phi$.  Otherwise, $\vec{A}$ is an \emph{$\exists$-attacker}.
\end{definition}

A $\forall$-attacker $\vec{A}$ always succeeds, because $P \parallel \vec{A}\models \neg \phi$ means that \emph{every} behavior of $P \parallel \vec{A}$ satisfies $\neg \phi$, that is, \emph{every} behavior of $P \parallel \vec{A}$ violates $\phi$.  Since $P \parallel \vec{A} \centernot{\models} \phi$, there must exist a computation $\sigma$ of $P \parallel \vec{A}$ such that $\sigma \models \neg \phi$, so, a $\forall$-attacker cannot succeed by blocking.  An $\exists$-attacker is any attacker that is not a $\forall$-attacker, and every attacker succeeds in at least one computation, so an $\exists$-attacker sometimes succeeds, and sometimes does not.  In most real-world systems, infinite attacks are impossible, implausible, or just uninteresting.
To avoid such attacks, we define an attacker that produces finite-length sequences of adversarial
behavior, and then ``recovers'', meaning that it behaves like the vulnerable process it replaced (see Fig.~\ref{acyclicAttacker}).

\begin{definition}[Attacker with Recovery]
Let $\vec{A}$ be a $(P, (Q_i)_{i = 0}^m, \phi)$-attacker.  If, for each $0 \leq i \leq m$, the attacker component $A_i$ consists of a finite directed acyclic graph (DAG) ending in the initial state of the vulnerable process $Q_i$, followed by all of the vulnerable process $Q_i$, then we say the attacker $\vec{A}$ is an \emph{attacker with recovery}.  We refer to the $Q_i$ postfix of each $A_i$ as its \emph{recovery}
\techreport{, as in Fig.~\ref{exampleAttackers}, $A_3$.}{.}  
\techreport{Intuitively, the vulnerable process is appended to the attacker.}{}
\label{rattackerDefinition}
\end{definition}

Note that researchers sometimes use ``recovery" to mean when a system undoes the damage caused by an attack.  We use the word differently, to mean when the property $\phi$ remains violated even under \emph{modus operandi} subsequent to attack termination.

\techreport{
	\begin{figure}[H]
}{
	\begin{figure}
}
\centering
\begin{adjustbox}{max totalsize={.9\textwidth}{.89\textheight},center}
\begin{tikzpicture}
\node[] (emptyP) {};
\node[draw,circle] (p0) [right=of emptyP] {$p_0 : \{ \texttt{OK} \}$};
\node[draw,circle] (p1) [right=of p0] {$p_1 : \emptyset $};
\draw[straight] (emptyP) to (p0);
\draw[looped] (p0) to[out=north,in=north east,above,looseness=5] node {$a?$} (p0);
\draw[looped] (p0) to[out=south,in=south west,below,looseness=5] node {$c?$} (p0);
\draw[straight] (p0) to[above] node {$b?$} (p1);
\draw[looped] (p0) to[out=south east,in=south west,below] node {$c?$} (p1);
\draw[looped] (p1) to[out=north west,in=north east,above,looseness=5] node {$a?,b?,c?$} (p1);
\node[draw,circle] (q0) [right=of p1] {$q_0$};
\node[] (emptyQ) [below=of q0] {};
\draw[straight] (emptyQ) to (q0);
\draw[looped] (q0) to[out=north west,in=north east,above,looseness=5] node {$a!$} (q0);
\node[draw,circle] (a0) [right=of q0] {$a_0^1$};
\node[] (emptyA1) [below=of a0] {};
\draw[straight] (emptyA1) to (a0);
\draw[looped] (a0) to[out=north west,in=north east,looseness=5,above] node {$b!$} (a0);
\node[draw,circle] (w0) [right=of a0] {$a_0^2$};
\node[] (emptyA2) [below=of w0] {};
\draw[straight] (emptyA2) to (w0);
\draw[looped] (w0) to[out=north west,in=north east,looseness=5,above] node {$c!$} (w0);

\draw[dashed] (-0.8+12,-2.8) rectangle (3.3+12,1.8);
\draw[dashed] (-0.5+12,-1.5) rectangle (3+12,1);
\draw[dashed,fill=white] (0.1+12,-2.4) rectangle (2.2+12,-0.7);

\node[] (A3 ) at (2.3+12,2)   {$A_3$};
\node[] (DAG) at (2.3+12,1.2) {DAG};
\node[] (Q3 ) at (2.4+12,-0.95) {$Q$};

\node[] (emptyA1) at (12,0) {};
\node[draw,circle,fill=white] (a0) [right=of emptyA1] {$a_0^3$};
\node[draw,circle,fill=white] (a3) [below=of a0] {$q_0$};
\draw[straight] (emptyA1) to (a0);
\draw[straight] (a0) to[right] node {$b!$} (a3);
\draw[looped] (a3) to[out=north west,in=south west,left,looseness=5] node {$a!$} (a3);
\end{tikzpicture}
\end{adjustbox}
\caption{From left to right: processes $P$, $Q$, $A_1$, $A_2$, $A_3$.  Let $\phi = \G\,\texttt{OK}$, and let the interface of $Q$ be $\mathcal{C}(Q) = (\emptyset, \{ a, b, c\})$.  Then $P \parallel Q \models \phi$.  $A_1$ and $A_2$ are both deterministic and have no input states.  Let $\mathcal{C}(A_1) = \mathcal{C}(A_2) = \mathcal{C}(Q)$.  Then, $A_1$ and $A_2$ are both $(P, (Q), \phi)$-attackers. $A_1$ is a $\forall$-attacker, and $A_2$ is an $\exists$-attacker.  $A_3$ is a $\forall$-attacker with recovery consisting of a DAG starting at $a_0^3$ and ending at the initial state $q_0$ of $Q$, plus all of $Q$, namely the recovery.}
\label{exampleAttackers}
\techreport{}{\vspace{-1.5cm}}
\end{figure}

\techreport{
	\begin{figure}[H]
}{
	\begin{figure}
}
\centering
\begin{adjustbox}{max totalsize={.7\textwidth}{.7\textheight},center}
\begin{tikzpicture}
\node[] (Ai) at (5,2.9) {$A_i$};
\draw[draw=gray,dashed] (-1.2,-2.5) rectangle (9.8,2.7);

\node[] (DAG) at (3,2.5) {DAG};
\draw[draw=gray,dashed] (-1,-2.3) rectangle (7,2.3);

\path[draw,use Hobby shortcut,closed=true,fill=black!25]
(7,-1) .. (7.1,1) .. (8,2) .. (9.5,1.4) .. (8.8,-1) .. (7.3,-2);
\node[] (Qi) at (8.3,0) {$Q_i$};

\node[] (empty) at (-1,0) {};
\node[draw,circle] (a0) at (0,0) {$a_0^i$};
\draw[straight] (empty) to (a0);
\node[draw,circle] (a1) at (2,1 ) {$a_1^i$};
\node[draw,circle] (a2) at (2,0 ) {$a_2^i$};
\node[draw,circle] (a3) at (2,-1) {$a_3^i$};
\draw[straight] (a0) to[above] node {$x_0$} (a1);
\draw[straight] (a0) to[above] node {$x_1$} (a2);
\draw[straight] (a0) to[above] node {$x_2$} (a3);
\draw[straight] (a1) to[right] node {$x_3$} (a2);
\node[] (m0) at (4,2 ) {...};
\node[] (m1) at (4,1 ) {...};
\node[] (m2) at (4,0 ) {...};
\node[] (m3) at (4,-1) {...};
\node[] (m4) at (4,-2) {...};
\draw[straight] (a3) to[above] node {$x_4$} (m2);
\draw[straight] (a1) to[above] node {$x_5$} (m0);
\draw[straight] (a1) to[above] node {$x_6$} (m1);
\draw[straight] (a2) to[above] node {$x_7$} (m2);
\draw[straight] (a3) to[above] node {$x_8$} (m3);
\draw[straight] (a3) to[above] node {$x_9$} (m4);
\draw[straight] (a0) to[out=south east,in=west,below,looseness=1] node {$x_{10}$} (m4);
\node[draw,circle,fill=white] (q0) at (7,0) {$q_0^i$};
\draw[straight] (m0) to[above] node {$x_k$} (q0);
\draw[straight] (m1) to[above] node {$x_{k+1}$} (q0);
\draw[straight] (m2) to[above] node {$x_{k+2}$} (q0);
\draw[straight] (m3) to[above] node {$x_{k+3}$} (q0);
\draw[straight] (m4) to[left] node {$x_{k+4}$} (q0);
\end{tikzpicture}
\end{adjustbox}
\caption{Suppose $\vec{A} = (A_i)_{i = 0}^m$ is attacker with recovery for $\textsc{TM} = (P, (A_i)_{i = 0}^m, \phi)$.  Further suppose $A_i$ has initial state $a_0^i$, and $Q_i$ has initial state $q_0^i$.  Then $A_i$ should consist of a DAG starting at $a_0^i$ and ending at $q_0^i$, plus all of $Q_i$, called the \emph{recovery}, indicated by the shaded blob.  Note that if some $Q_i$ is non-deterministic, then there can be no attacker with recovery, because $Q_i$ is a subprocess of $A_i$, and all the $A_i$s must be deterministic in order for $\vec{A}$ to be an attacker.}
\label{acyclicAttacker}
\techreport{}{\vspace{-1.2cm}}
\end{figure}

\subsection{Attacker Synthesis Problems}

Each type of attacker - $\exists$ versus $\forall$, with recovery versus without - naturally induces a synthesis problem.

\begin{problem}[$\exists$-Attacker Synthesis Problem ($\exists$ASP)]
Given a threat model \textsc{TM}, find a \textsc{TM}-attacker, if one exists; otherwise state that none exists.
\label{EASP}
\end{problem}
\begin{problem}[Recovery $\exists$-Attacker Synthesis Problem (R-$\exists$ASP)]
Given a threat model \textsc{TM}, find a \textsc{TM}-attacker with recovery, if one exists; otherwise state that none exists.
\label{rEASP} 
\end{problem}
We defined $\exists$ and $\forall$-attackers to be disjoint, but, 
if the goal is to find an $\exists$-attacker, 
then surely a $\forall$-attacker is acceptable too; we therefore 
did not restrict the $\exists$-problems to only $\exists$-attackers.
Next we define the two $\forall$-problems, which remain for future work.
\begin{problem}[$\forall$-Attacker Synthesis Problem ($\forall$ASP)]
Given a threat model \textsc{TM}, find a \textsc{TM}-$\forall$-attacker, if one exists; otherwise state that none exists.
\label{AASP}
\end{problem}
\begin{problem}[Recovery $\forall$-Attacker Synthesis Problem (R-$\forall$ASP)]
Given a threat model \textsc{TM}, find a \textsc{TM}-$\forall$-attacker with recovery, if one exists; otherwise state that none exists.
\label{rAASP}
\end{problem}

\techreport{}{\vspace{-2em}}

\section{Solutions}\label{sec:sol}
\techreport{}{\vspace{-1em}}
We present solutions $\exists$ASP and R-$\exists$ASP for any number of attackers, and for both safety and liveness properties.  Our success criteria are soundness and completeness.  
Both solutions are polynomial in the product of the size of~$P$ and the sizes of the interfaces of the~$Q_i$s, and exponential in the size of the property~$\phi$~\cite{IEEECS86}.
For real-world performance, see Section~\ref{sec:tcp}.

We reduce $\exists$ASP and R-$\exists$ASP to model-checking.  
The idea is to replace the vulnerable processes $Q_i$ with appropriate ``gadgets'',
then ask a model-checker whether the system violates a certain property. 
We prove that existence of a violation (a counterexample) \rem{is equivalent
to}\add{implies} existence of an attacker, \add{and for the $\exists$ASP, that the implication goes both ways.} 
\add{Then,}\rem{and} we show how to transform the counterexample into
an attacker. The gadgets and the LTL formula are different, depending on whether we seek 
attackers without or with recovery.

\subsection{Gadgetry}

\techreport{We begin by defining \emph{lassos} and \emph{bad prefixes}.}{}
A computation $\sigma$ is a \emph{lasso} if it equals a finite word $\alpha$, then infinite repetition of a finite word $\beta$, i.e., $\sigma = \alpha \cdot \beta^{\omega}$.  A prefix $\alpha$ of a computation $\sigma$ is called a \emph{bad prefix} for $P$ and $\phi$ if $P$ has $\geq 1$ runs inducing computations starting with $\alpha$, and every computation starting with $\alpha$ violates $\phi$.  We naturally elevate the terms \emph{lasso} and \emph{bad prefix} to runs and their prefixes.  We assume a \emph{model checker}: a procedure \textsc{MC}$(P, \phi)$ that takes as input a process $P$ and property $\phi$, and returns $\emptyset$ if $P \models \phi$, or one or more violating lasso runs or bad prefixes of runs for $P$ and $\phi$, otherwise \cite{BaierKatoenBook}.

Attackers cannot have atomic propositions.  So, the only way for $\vec{A}$ to \emph{attack} $\textsc{TM}$ is by sending and receiving messages, hence the space of attacks is within the space of labeled transition sequences.  The \emph{Daisy Process} nondeterministically exhausts the space of input and output events of a vulnerable process.

\begin{definition}[Daisy Process]
	Let $Q = \langle \emptyset, I, O, S, s_0, T, L \rangle$ be a process with no atomic propositions.  Then the \emph{daisy} of $Q$, denoted $\textsc{Daisy}(Q)$, is the process defined below, where $L' : \{ d_0 \} \to \{ \emptyset \}$ is the map such that $L'(d_0)=\emptyset$.

\begin{equation}
\textsc{Daisy}(Q) = \langle \emptyset, I, O, \{ d_0 \}, d_0, \{ (d_0, w, d_0) \mid w \in I \cup O \}, L' \rangle
\end{equation}
\techreport{
	\begin{figure}[H]
	\centering
	\begin{tikzpicture}
	\node[] (empty) {};
	\node[draw,circle] (d0) [right=of empty] {$d_0$};
	\draw[straight] (empty) to (d0);
	\draw[looped] (d0) to[out=north west,in=north east,looseness=5,above] node {$i? \text{ for } i \in I$} (d0);
	\draw[looped] (d0) to[out=south east,in=south west,looseness=5,below] node {$o! \text{ for } o \in O$} (d0);
	\end{tikzpicture}
	\caption{$\textsc{Daisy}(Q)$ has transitions $d_0 \xrightarrow[]{i?} d_0$ and $d_0 \xrightarrow[]{o!} d_0$ for each $i \in I$ and $o \in O$.  So if $Q$ has a run $s_0 \xrightarrow[]{m_0} s_1 \xrightarrow[]{m_1} s_2 \xrightarrow[]{m_2} ...$, then $\textsc{Daisy}(Q)$ must have an I/O-equivalent run $d_0 \xrightarrow[]{m_0} d_0 \xrightarrow[]{m_1} d_0 \xrightarrow[]{m_2} ...$.  In other words, $\textsc{Daisy}(Q)$ can do everything $Q$ can do, and more.}
	\end{figure}
}{}
\end{definition}

Next, we define a \emph{Daisy with Recovery}.  This gadget is an \emph{abstract process}, i.e., a generalized process with a non-empty set of initial states $S_0 \subseteq S$.  Composition and LTL semantics for abstract processes are naturally defined.  We implicitly transform processes to abstract processes by wrapping the initial state in a set.

\begin{definition}[Daisy with Recovery]
Given a process $Q_i = \langle \emptyset, I, O, S, s_0, T, L \rangle$, the \emph{daisy with recovery} of $Q_i$, denoted $\textsc{RDaisy}(Q_i)$, is the abstract process $\textsc{RDaisy}(Q_i) = \langle \text{AP}, I, O, S', S_0, T', L' \rangle$, with atomic propositions $\text{AP} = \{ \texttt{recover}_i \}$, states $S' = S \cup \{ d_0 \}$, initial states $S_0 = \{ s_0, d_0 \}$, transitions \(T' = T \cup \{ (d_0, x, w_0) \mid x \in I \cup O, w_0 \in S_0 \}\), and labeling function $L' : S' \to 2^\text{AP}$ that takes $s_0$ to $\{ \texttt{recover}_i \}$ and other states to $\emptyset$.  (We reserve the symbols $\texttt{recover}_0, ...$ for use in daisies with recovery, so they cannot be sub-formulae of the property in any threat model.)
\end{definition}
\techreport{
The daisy with recovery gadget is illustrated in Fig.~\ref{fig_daisywithrecovery}.

\begin{figure}[H]

\centering
\techreport{}{\begin{adjustbox}{max totalsize={.5\textwidth}{.4\textheight},center}}
\begin{tikzpicture}
\path[draw,use Hobby shortcut,closed=true,fill=black!25]
(7-2.9,-1+2) .. (7.1-2.9,1+2) .. (8-2.9,2+2) .. (9.5-2.9,1.4+2) .. (8.8-2.9,-1+2) .. (7.3-2.9,-2+2);
\node[] (Qi) at (8.3-3,0+3) {$Q_i$};

\node[] (empty) {};
\node[draw,circle] (d0) [above=of empty] {$d_0 : \emptyset$};
\draw[straight] (empty) to (d0);
\draw[looped] (d0) to[out=north west,in=north east,looseness=5,above] node {$i? \text{ for } i \in I$} (d0);
\draw[looped] (d0) to[out=north west,in=south west,looseness=5,left] node {$o! \text{ for } o \in O$} (d0);
\node[] (empty2) [right=of d0] {};
\node[draw,circle,fill=white] (s0) [right=of empty2] {$s_0 : \{ \texttt{recover}_i \}$};
\node[] (empty3) [below=of s0] {};
\draw[straight] (empty3) to (s0);
\draw[straight] (d0) to[above] node[align=center,above] {$i? \text{ for } i \in I$\\$o! \text{ for } o \in O$} (s0);
\end{tikzpicture}
\techreport{}{\end{adjustbox}}
\caption{$\textsc{RDaisy}(Q_i)$ is the result of (1) directly connecting $\textsc{Daisy}(Q_i)$ to the initial state of $Q_i$ in every possible way, (2) allowing both the initial state of the daisy and the initial state of $Q_i$ to be initial states of the abstract process, and (3) adding a single atomic proposition $\texttt{recover}_i$ to the initial state of $Q_i$.}
  \label{fig_daisywithrecovery}
\end{figure}
}{}

\subsection{Solution to $\exists$ASP \label{EASPSolution}}

Let $\textsc{TM} = (P, (Q_i)_{i = 0}^m, \phi)$ be a threat model.  
Our goal is to find an attacker for $\textsc{TM}$, if one exists, or state that none exists, otherwise.
First, we check whether the system
$P \parallel \textsc{Daisy}(Q_0) \parallel ... \parallel \textsc{Daisy}(Q_m)$ satisfies $\phi$.
If it does, then no attacker exists, as the daisy processes encompass any possible attacker
behavior.  
Define a set $R$ returned by the model-checker \textsc{MC}:
\begin{equation}
R = \textsc{MC}(P \parallel \textsc{Daisy}(Q_0) \parallel ... \parallel \textsc{Daisy}(Q_m), \phi)
\label{equationR}
\end{equation}
If $R = \emptyset$ then no attacker exists. 
\techreport{
	\begin{tcolorbox}[colback=lightgreen!20,colframe=lightgreen,sharp corners,title=Solution to $\exists$ASP (Pseudocode),coltitle=black]
	
	\texttt{1.} \(R \gets \textsc{MC}(
		P \parallel \textsc{Daisy}(Q_0) \parallel 
			... \parallel \textsc{Daisy}(Q_m), 
		\phi)\)
	
	\texttt{2.} \(\textit{If }(R = \emptyset) 
		\textit{ then } \textbf{return} \, \text{``no \textsc{TM}-attacker exists"}\)
	\end{tcolorbox}
}{}
On the other hand, if the system violates $\phi$, then we can transform a violating run into a set of attacker processes by projecting it onto the corresponding
interfaces.  Choose a violating run or bad prefix $r \in R$ arbitrarily.  Either $r = \alpha$ is some finite bad prefix, or $r = \alpha \cdot \beta^{\omega}$ is a violating lasso.  For each $0 \leq i \leq m$, let $\alpha_i$ be the projection of $\alpha$ onto the process $\textsc{Daisy}(Q_i)$.  That is, let $\alpha_i = []$; then for each $(\vec{s}, x, \vec{s}')$ in $\alpha$, if $x$ is an input or an output of $Q_i$, and $q, q'$ are the states $\textsc{Daisy}(Q_i)$ embodies in $\vec{s}, \vec{s}'$, add $(q, x, q')$ to $\alpha_i$.
\techreport{
	\begin{tcolorbox}[colback=lightgreen!20,colframe=lightgreen,sharp corners,]

	\texttt{3.} \textbf{Choose} $r \in R$ arbitrarily.

	\texttt{4.} Either $r = \alpha \cdot \beta^{\omega}$ is a lasso, or $r = \alpha$ is
	a bad prefix.

	\texttt{5.} $(\alpha_i)_{i = 0}^m \gets ([])_{i = 0}^m$

	\texttt{6.} $\vec{A} \gets (\perp)_{i = 0}^m$

	\texttt{7.} \textit{For} $0 \leq i \leq m$ \textit{do}

	\texttt{7.1.} \textemdash\textemdash \, \textit{For} $0 \leq j < \abs{\alpha}$ \textit{do}

	\texttt{7.1.1.} \textemdash\textemdash\textemdash\textemdash \, $(\vec{s}, x, \vec{s}') \gets \alpha[j]$

	\texttt{7.1.2.} \textemdash\textemdash\textemdash\textemdash \, \textit{If }$(x \in I_i \cup O_i) \textit{ then } \alpha_i.\texttt{append}((\vec{s}[i], x, \vec{s}'[i]))$

	\end{tcolorbox}
}{}
For each $\alpha_i$, create an incomplete process $A_i^{\alpha}$ with a new state $s_{j + 1}^{\alpha}$ and 
transition $s_j^{\alpha} \xrightarrow[]{z} s_{j + 1}^{\alpha}$ 
for each $\alpha_i[j] = (d_0^i, z, d_0^i)$ for $0 \leq j < \abs{\alpha_i}$.  
If $r = \alpha \cdot \beta^{\omega}$ is a lasso, then for each $0 \leq i \leq m$, define $A_i^{\beta}$ 
from $\beta_i$ in the same way that we defined $A_i^{\alpha}$ from $\alpha_i$; 
let $A_i'$ be the result of merging the first and last states of $A_i^{\beta}$ with the last state of $A_i^{\alpha}$.
\techreport{
	\begin{tcolorbox}[colback=lightgreen!20,colframe=lightgreen,sharp corners,]

	\texttt{7.2.} \textemdash\textemdash \, \textbf{Create} a new state $s_0^{\alpha_i}$

	\texttt{7.3.} \textemdash\textemdash \, $S_i \gets \{ s_0^{\alpha_i} \}$

	\texttt{7.4.} \textemdash\textemdash \, $T_i \gets \emptyset$

	\texttt{7.5.} \textemdash\textemdash \, \textit{For} $0 \leq j < \abs{\alpha_i}$ \textit{do}

	\texttt{7.5.1.} \textemdash\textemdash\textemdash\textemdash \, $(\_, z, \_) \gets \alpha_i[j]$

	\texttt{7.5.2.} \textemdash\textemdash\textemdash\textemdash \, \textbf{Create} a new state $s_{j + 1}^{\alpha_i}$

	\texttt{7.5.3.} \textemdash\textemdash\textemdash\textemdash \, $S_i.\texttt{append}(s_{j + 1}^{\alpha_i})$

	\texttt{7.5.4.} \textemdash\textemdash\textemdash\textemdash \, $T_i.\texttt{append}((s_j^{\alpha_i}, z, s_{j + 1}^{\alpha_i}))$

	\texttt{7.6.} \textemdash\textemdash \, Let $L_i : S_i \to 2^{\emptyset}$ denote the function $\lambda s . \emptyset$

	\texttt{7.7.} \textemdash\textemdash \, $A_i^{\alpha} \gets \langle \emptyset, I_i, O_i, S_i, T_i, L_i \rangle$

	\texttt{7.8.} \textemdash\textemdash \, \textit{If }$(r = \alpha \cdot \beta^{\omega} \text{ is a lasso})$\textit{ then }

	\texttt{7.8.1.} \textemdash\textemdash\textemdash\textemdash \, Define $\beta_i$ from $\beta$ in the same way we defined $\alpha_i$ from $\alpha$ in \texttt{7.1}

	\texttt{7.8.2.} \textemdash\textemdash\textemdash\textemdash \, Define $A_i^{\beta} = \langle \emptyset, I_i, O_i, S_i^{\beta}, T_i^{\beta}, L_i^{\beta} \rangle$ from $\beta_i$ as we defined $A_i^{\alpha}$ from $\alpha_i$ in \texttt{7.2-7.6}

	\texttt{7.8.3.} \textemdash\textemdash\textemdash\textemdash \, $A_i \gets$ $A_i^{\alpha}$ glued to $A_i^{\beta}$ where we merge $A_i^{\alpha}$'s final state $s_{\abs{\alpha_i}}^{\alpha_i}$ with $A_i^{\beta}$'s initial state $s_0^{\beta_i}$

	\end{tcolorbox}
}{}
Otherwise, if $r = \alpha$ is a bad prefix, 
let $A_i'$ be the result of adding an input self-loop to the last state of $A_i^{\alpha}$, 
or an output self-loop if $Q_i$ has no inputs.  Either way, $A_i'$ is an incomplete attacker.  
Finally let $A_i$ be the result of making every input state in $A_i'$ input-enabled via self-loops, 
and return the attacker $\vec{A} = (A_i)_{i = 0}^m$.
\techreport{
	\begin{tcolorbox}[colback=lightgreen!20,colframe=lightgreen,sharp corners,]

	\texttt{7.9.} \textemdash\textemdash \, \textit{Else if} $(I_i \neq \emptyset)$ \textit{then }

	\texttt{7.9.1.} \textemdash\textemdash\textemdash\textemdash \, $A_i \gets$ the result of adding a single arbitrary input self-loop to the final state $s_{\abs{\alpha_i}}^{\alpha_i}$ of $A_i^{\alpha}$

	\texttt{7.10.} \textemdash\textemdash \, \textit{Else} $A_i \gets$ the result of adding a single arbitrary output self-loop to the final state $s_{\abs{\alpha_i}}^{\alpha_i}$ of $A_i^{\alpha}$

	\texttt{7.11.} \textemdash\textemdash \, \textit{For} $s \in \text{states}(A_i)$ \textit{do}

	\texttt{7.11.1.} \textemdash\textemdash\textemdash\textemdash \, \textit{If} $s$ is an input state \textit{then} \textit{For} $x \in I_i$ \textit{do} $\text{transitions}(A_i).\texttt{append}((s, x, s))$

	\texttt{8.} \textbf{return} $\vec{A}$

	\end{tcolorbox}
}{}
An illustration of the method is given in Figure~\ref{exampleThreatModelR}.

\begin{figure}[ht]
\begin{mdframed}
\textbf{Threat Model}: $\textsc{TM}' = (P, (Q_0, Q_1), \phi)$, where the processes from left to right are $P$, $Q_0$, and $Q_1$, and where $\phi = \F \G\, l$.  $P$ has inputs \techreport{$k$ and $m$,}{$k, m$,} and output $n$.  $Q_0$ has no inputs, and output $m$.  $Q_1$ has inputs \techreport{$n$ and $h$,}{$n, h$,} and output $k$.  Recall that $P \parallel Q_0 \parallel Q_1 \models \phi$.

\begin{adjustbox}{max totalsize={.8\textwidth}{.7\textheight},center}
\begin{tikzpicture}
\node[]            (empty) at (-3.3, 0) {};
\node[draw,circle] (p0)    at (-2,   0) {$p_0 : \emptyset$};
\node[draw,circle] (p1)    at (0,    0) {$p_1 : \emptyset$};
\node[draw,circle] (p2)    at (1+1,    0) {$p_2 : \emptyset$};
\node[draw,circle] (p3)    at (3+1,    0) {$p_3 : \{ l \}$};
\draw[straight]    (empty) to (p0);
\draw[straight]    (p0)    to[above] node {$k?$} (p1);
\draw[straight]    (p1)    to[above] node {$m?$} (p2);
\draw[straight]    (p2)    to[below] node {$m?$} (p3);
\draw[looped]      (p3)    to[out=north east,in=south east,right,looseness=3] node {$k?$} (p3);
\draw[looped]      (p3)    to[out=south west,in=south east,below] node {$n!$} (p2);
\node[]            (emptyQ) at (6, 0) {};
\node[draw,circle] (q00)    at (6+1,   0) {$q_0^0$};
\draw[straight]    (emptyQ) to (q00);
\draw[looped]      (q00)    to[out=north east,in=north west,looseness=5,above] node {$m!$} (q00);
\node[]            (emptyQ1) at (7.5+1, 0) {};
\node[draw,circle] (q01)     at (8.5+1, 0) {$q_0^1$};
\node[draw,circle] (q11)     at (10+1 , 0) {$q_1^1$};
\draw[straight]    (emptyQ1) to (q01);
\draw[straight]    (q01)     to[above] node {$n?$} (q11);
\draw[looped]      (q01)     to[out=north east,in=north west,looseness=5,above] node {$k!$} (q01);
\draw[looped]      (q11)     to[out=north east,in=north west,looseness=5,above] node {$k!$} (q11);
\end{tikzpicture}
\end{adjustbox}

\textbf{Violating run}: A run $r \in R$ where $R$ is defined as in Equation~\ref{equationR}.

\[r =
\overbrace{
\begin{bmatrix}p_0\\d_0^0\\d_0^1\end{bmatrix}
\xrightarrow[]{k!}
\begin{bmatrix}p_1\\d_0^0\\d_0^1\end{bmatrix}
\xrightarrow[]{m!}}^{\large \alpha}
\overbrace{
\begin{bmatrix}p_2\\d_0^0\\d_0^1\end{bmatrix}
\xrightarrow[]{m!}
\begin{bmatrix}p_3\\d_0^0\\d_0^1\end{bmatrix}
\xrightarrow[]{n!}
\begin{bmatrix}p_2\\d_0^0\\d_0^1\end{bmatrix}
\xrightarrow[]{m!}
\begin{bmatrix}p_3\\d_0^0\\d_0^1\end{bmatrix}
\xrightarrow[]{n!}
...}^{\large \beta^{\omega}} \in R
\]

\textbf{Application of solution:} $r$ is projected and translated into an attacker $\vec{A} = (A_0, A_1)$.

\begin{adjustbox}{max totalsize={.9\textwidth}{.67\textheight},center}
\begin{tikzpicture}
\node[]            (empty0)     at (-1.8, 0) {$\alpha_0 =$};
\node[]            (alpha00)    at (0,  0) {$d_0^0$};
\node[]            (alpha01)    at (2,  0) {$d_0^0$};
\node[]            (empty1)     at (-1.8, -1) {$A_0^{\alpha} =$};
\node[]            (empty2)     at (-1.3, -1) {};
\node[draw,circle] (A0alphaa0) at (0, -1) {$s_0^{\alpha_0}$};
\node[draw,circle] (A0alphaa1) at (2, -1) {$s_1^{\alpha_0}$};
\draw[straight] (alpha00) to[above] node {$m!$} (alpha01);
\draw[straight] (A0alphaa0) to[above] node {$m!$} (A0alphaa1);
\draw[straight] (empty2) to (A0alphaa0);

\node[]            (empty01)    at (4,   0) {$\alpha_1 =$};
\node[]            (alpha10)    at (5.8, 0) {$d_0^1$};
\node[]            (alpha11)    at (7.8, 0) {$d_0^1$};
\node[]            (empty11)    at (4,  -1) {$A_1^{\alpha} =$};
\node[]            (empty21)    at (4.5,-1) {};
\node[draw,circle] (A1alphaa0)  at (5.8, -1) {$s_0^{\alpha_1}$};
\node[draw,circle] (A1alphaa1)  at (7.8, -1) {$s_1^{\alpha_1}$};
\draw[straight] (alpha10) to[above] node {$k!$} (alpha11);
\draw[straight] (A1alphaa0) to[above] node {$k!$} (A1alphaa1);
\draw[straight] (empty21) to (A1alphaa0);

\node[]             (emptyB0)   at (-1.8, -2) {$\beta_0 =$};
\node[]             (beta00)    at (0,    -2) {$d_0^0$};
\node[]             (beta01)    at (2,    -2) {$d_0^0$};
\node[]             (emptyB1)   at (-1.8, -3) {$A_0^{\beta} =$};
\node[]             (emptyB2)   at (-1.3, -3) {};
\node[draw,circle]  (B0alphaa0) at (0,    -3) {$s_0^{\beta_0}$};
\node[draw,circle]  (B1alphaa1) at (2,    -3) {$s_1^{\beta_0}$};
\draw[straight]     (beta00)    to[above] node {$m!$} (beta01);
\draw[straight]     (B0alphaa0) to[above] node {$m!$} (B1alphaa1);
\draw[straight]     (emptyB2)   to (B0alphaa0);

\node[]             (emptyB10)  at (4,    -2) {$\beta_1 = $};
\node[]             (beta10)    at (5.8,  -2) {$d_0^1$};
\node[]             (beta11)    at (7.8,  -2) {$d_0^1$};
\node[]             (emptyB11)  at (4,    -3) {$A_1^{\beta} =$};
\node[]             (emptyB12)  at (4.5,  -3) {};
\node[draw,circle]  (B1alphaa0) at (5.8,  -3) {$s_0^{\beta_1}$};
\node[draw,circle]  (B1alphaa1) at (7.8,  -3) {$s_1^{\beta_1}$};
\draw[straight]     (beta10)    to[above] node {$n?$} (beta11);
\draw[straight]     (B1alphaa0) to[above] node {$n?$} (B1alphaa1);
\draw[straight]     (emptyB12)  to (B1alphaa0);

\node[]             (A0prime)       at (-1.8, -4) {$A_0' =$};
\node[]             (emptyA00prime) at (-1.3, -4) {};
\node[draw,circle]  (A0prime0)      at (0,    -4) {$a_0^{0'}$};
\node[draw,circle]  (A0prime1)      at (2,    -4) {$a_1^{0'}$};
\draw[straight]     (emptyA00prime) to (A0prime0);
\draw[straight]     (A0prime0) to[above] node {$m!$} (A0prime1);
\draw[looped]       (A0prime1) to[out=north east,in=south east,looseness=5,right] node {$m!$} (A0prime1);

\node[]             (A0prime)       at (-1.8, -5) {$A_0 =$};
\node[]             (emptyA00prime) at (-1.3, -5) {};
\node[draw,circle]  (A0prime0)      at (0,    -5) {$a_0^{0}$};
\node[draw,circle]  (A0prime1)      at (2,    -5) {$a_1^{0}$};
\draw[straight]     (emptyA00prime) to (A0prime0);
\draw[straight]     (A0prime0) to[above] node {$m!$} (A0prime1);
\draw[looped]       (A0prime1) to[out=north east,in=south east,looseness=5,right] node {$m!$} (A0prime1);

\node[]             (A1prime)       at (4,    -4) {$A_1' =$};
\node[]             (emptyA10prime) at (4.5,  -4) {};
\node[draw,circle]  (A1prime0)      at (5.8,  -4) {$a_0^{1'}$};
\node[draw,circle]  (A1prime1)      at (7.8,  -4) {$a_1^{1'}$};
\draw[straight]     (emptyA10prime) to (A1prime0);
\draw[straight]     (A1prime0)      to[above] node {$k!$} (A1prime1);
\draw[looped]       (A1prime1)      to[out=north east,in=south east,looseness=5,right] node {$n?$} (A1prime1);

\node[]             (A1)            at (4,    -5) {$A_1 = $};
\node[]             (emptyA10)      at (4.5,  -5) {};
\node[draw,circle]  (A10)           at (5.8,  -5) {$a_0^1$};
\node[draw,circle]  (A11)           at (7.8,  -5) {$a_1^1$};
\draw[straight]     (emptyA10)      to (A10);
\draw[straight]     (A10)           to[above] node {$k!$} (A11);
\draw[looped]       (A11)           to[out=north east,in=south east,looseness=5,right] node {$n?, h?$} (A11);
\end{tikzpicture}
\end{adjustbox}
\end{mdframed}
\caption{Illustration of solution to $\exists$ASP.  Example threat model $\textsc{TM}'$ on top, followed by a violating run in $R$, followed by translation of the run into attacker.}
\label{exampleThreatModelR}
\end{figure}

\begin{theorem}[$\exists$ASP Solution is Sound and Complete]
\label{solution1theorem}
Let $\textsc{TM} = (P, (Q_i)_{i = 0}^m, \phi)$ be a threat model, and define $R$ as in Eqn.~\ref{equationR}.  Then the following hold.  1) $R \neq \emptyset$ iff a $\textsc{TM}$-attacker exists.  2)  If $R \neq \emptyset$, then the procedure above eventually returns a $\textsc{TM}$-attacker.
\end{theorem}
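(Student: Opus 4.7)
The plan is to establish the two directions of part~(1) separately and use the constructive argument to simultaneously yield part~(2). I will call the direction ``attacker exists $\Rightarrow R \neq \emptyset$'' the \emph{simulation} direction, and its converse the \emph{construction} direction.

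For the simulation direction, assume $\vec{A} = (A_i)_{i=0}^m$ is a $\textsc{TM}$-attacker, so $P \parallel \vec{A}$ admits a computation $\sigma$ with $\sigma \centernot{\models} \phi$. The key observation is that $\textsc{Daisy}(Q_i)$ has, as self-loops at its unique state $d_0$, every label in $I_i \cup O_i$, and assigns $\emptyset$ to $d_0$; meanwhile $A_i \in \mathcal{C}(Q_i)$ also uses only labels in $I_i \cup O_i$ and by the attacker definition contributes no atomic propositions. Hence every run of $P \parallel A_0 \parallel \cdots \parallel A_m$ lifts, step-by-step, to a run of $P \parallel \textsc{Daisy}(Q_0) \parallel \cdots \parallel \textsc{Daisy}(Q_m)$ inducing the \emph{same} computation. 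Applying this to $\sigma$, the daisified composition violates $\phi$, so the model checker must return a nonempty $R$.

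For the construction direction and part~(2), pick any $r \in R$ and run the procedure; I need to verify that the returned $\vec{A}$ satisfies the four conditions of a $\textsc{TM}$-attacker. The interface $A_i \in \mathcal{C}(Q_i)$ and the absence of atomic propositions are immediate from how $A_i$ is defined. Determinism follows from a short case analysis: DAG states have a unique outgoing transition by construction; the lasso merge (if $r$ is a lasso) or the terminal self-loop (if $r$ is a bad prefix) adds exactly one more outgoing edge of a uniform input/output type at the terminal state; and input-enabling only adds self-loops for inputs lacking a prior outgoing edge, so the transition relation remains functional and each non-deadlock state is purely an input or purely an output state. To show $P \parallel \vec{A} \centernot{\models} \phi$, I would lift $r$ into a run $r'$ of $P \parallel \vec{A}$ with an identical induced computation: each daisy self-loop step in $r$ is mirrored by the freshly-created transition in the corresponding $A_i$, while $P$'s moves are preserved verbatim. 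Since $r \centernot{\models} \phi$, so does $r'$.

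The main obstacle is the bad-prefix case of the attack condition: when $r = \alpha$ is finite, the lifted run $r'$ is only finite, and I must argue that it extends to an infinite run of $P \parallel \vec{A}$ whose computation begins with the label sequence of $\alpha$ and therefore violates $\phi$. My plan is to invoke the definition of bad prefix: the daisified composition admits some infinite extension of $\alpha$, and \emph{every} infinite computation prefixed by $\alpha$ violates $\phi$. It then suffices to exhibit any infinite extension in $P \parallel \vec{A}$. At the terminal attacker states, if $I_i \neq \emptyset$ then input-enabling ensures $A_i$ can consume any output $P$ might issue; if $I_i = \emptyset$ then the single chosen output self-loop can be taken repeatedly. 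The delicate substep is ruling out pathological configurations where $P$ requires a specific output from some $A_i$ whose terminal state is only input-enabled; I would discharge this by a case analysis on the outgoing edges of $P$ at the final state of the lifted prefix, appealing to the existence of the daisified extension to show that a compatible combination of actions remains available in the restricted composition.
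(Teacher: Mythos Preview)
Your overall strategy mirrors the paper's sketch: lift attacker runs to daisy runs for one direction, and lift the model-checker's witness back to an attacker run for the other, checking determinism and interface conditions along the way. You actually go beyond the paper's sketch in isolating the bad-prefix case as the delicate point; the paper's sketch does not address it explicitly.

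The gap is that your proposed discharge of that case---a case split on $P$'s outgoing edges, using the daisified extension to certify that ``a compatible combination of actions remains available in the restricted composition''---does not go through, because the terminal state the procedure builds for $A_i$ is strictly weaker than $\textsc{Daisy}(Q_i)$. Concretely: let $P$ have inputs $\{a,c\}$, output $\{b\}$, states $p_0,p_1$ with transitions $p_0\xrightarrow{a?}p_0$, $p_0\xrightarrow{c?}p_1$, $p_1\xrightarrow{a?}p_1$, and $L(p_1)=\{\texttt{bad}\}$; let $Q$ have input $\{b\}$, outputs $\{a,c\}$, a single state $q_0$ with $q_0\xrightarrow{a!}q_0$; and take $\phi=\G\neg\texttt{bad}$. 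Then $(P,(Q),\phi)$ is a valid threat model, and $P\parallel\textsc{Daisy}(Q)$ violates $\phi$ via $(p_0,d_0)\xrightarrow{c!}(p_1,d_0)\xrightarrow{a!}(p_1,d_0)\cdots$. If $\textsc{MC}$ returns the bad prefix $\alpha=\big((p_0,d_0),c,(p_1,d_0)\big)$, the procedure builds $A_0$ as $s_0\xrightarrow{c!}s_1$ and, since $I_Q=\{b\}\neq\emptyset$, equips $s_1$ only with the input self-loop $s_1\xrightarrow{b?}s_1$. But then $(p_1,s_1)$ is a deadlock in $P\parallel A_0$: $P$ cannot emit $b$ from $p_1$, and $A_0$ at $s_1$ cannot emit the $a!$ that the daisified extension relied on. Hence $P\parallel A_0$ has no infinite run and vacuously satisfies $\phi$; the returned $A_0$ is not an attacker even though one plainly exists (replace the $b?$ loop by $a!$). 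The daisified extension is therefore not transferable to the restricted composition, and no local case analysis at the terminal state can manufacture the missing output. To close this you need something extra---either an assumption that $\textsc{MC}$ returns a lasso whenever one exists, or a different terminal gadget (e.g.\ retaining the full daisy at the last state) that preserves enough output capability to guarantee an infinite continuation of the lifted prefix.
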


\techreport{
\begin{sketch}
We prove 2) then 1).  Suppose $r \in R$.  Processes are finite and threat models are finitely large, so the procedure eventually terminates.  We need to show the result $\vec{A} = (A_i)_{i = 0}^m$ is a \textsc{TM}-attacker.  Showing the result is deterministic is straightforward, and it should be equally clear that each $A_i$ has the same interface as its respective $Q_i$.  We inductively demonstrate that $P \parallel A_0 \parallel ... \parallel A_m$ has some run $r'$ that is I/O-equivalent to the run $r$ and induces the same computation.  So then $r' \centernot{\models} \phi$, so $\vec{A}$ is a \textsc{TM}-attacker and therefore 2) holds.  

We now turn our attention to 1).  If a \textsc{TM}-attacker $\vec{A}'$ exists, then $P \parallel A_0' \parallel ... \parallel A_m'$ has a run $r$ violating $\phi$.  The daisies can do everything the $Q_i$s can do and more, so the daisies yield some I/O-equivalent run $r'$ violating $\phi$, and so $R \neq \emptyset$.  On the other hand, if $R = \emptyset$ then we can easily prove by way of contradiction that no attacker exists, since attackers, daisies, and vulnerable processes have no atomic propositions, and therefore any violating run of an attacker with $P$ could be translated into an I/O-equivalent run of the daises with $P$ inducing the same computation.  So 1) holds and we are done.
\end{sketch}
}{}

\subsection{Solution to R-$\exists$ASP \label{fEASPSolution}}

Let $\textsc{TM} = (P, (Q_i)_{i = 0}^m, \phi)$ be a threat model as before.  Now our goal is to find a $\textsc{TM}$-attacker \emph{with recovery}, if one exists, or state that none exists, otherwise.
The idea to solve this problem is similar to the idea for finding attackers without recovery, with two differences.
First, the daisy processes are now more complicated, and include recovery to the original $Q_i$ processes.
Second, the formula used in model-checking is not $\phi$, but a more complex formula $\psi$ to ensure that
the attacker eventually recovers, i.e., all the attacker components eventually recover.  We define the property~$\psi$ so that in prose it says ``if all daisies eventually recover, then~$\phi$ holds".  We then define~$R$ like before, except we replace daisies with daisies with recovery, and~$\phi$ with~$\psi$, as defined below.

\begin{equation}
\psi = \big( \bigwedge_{0 \leq i \leq m} \F\, \texttt{recover}_i \big) \implies \phi
\label{equationPsi}
\end{equation}

\begin{equation}
R = \textsc{MC}(P \parallel \textsc{RDaisy}(Q_0) \parallel ... \parallel \textsc{RDaisy}(Q_m), \psi)
\label{equationRrecovery}
\end{equation}

If $R = \emptyset$ then no attacker with recovery exists.
If any $Q_i$ is not deterministic, then likewise no attacker with recovery exists, because our attacker definition requires the attacker to be deterministic, but if $Q_i$ is not and $Q_i \subseteq A_i$ then neither is $A_i$.
\techreport{
	\begin{tcolorbox}[colback=lightgreen!20,colframe=lightgreen,sharp corners,title=Solution to R-$\exists$ASP (Pseudocode),coltitle=black]
	
	\texttt{1.} \(R \gets \textsc{MC}(
		P \parallel \textsc{RDaisy}(Q_0) \parallel 
			... \parallel \textsc{RDaisy}(Q_m), 
		(( \bigwedge_{0 \leq i \leq m} \F\, \texttt{recover}_i) \implies \phi))\)
	
	\texttt{2.} \(\textit{If }(R = \emptyset \text{ or any } Q_i \text{ is not deterministic}) 
		\textit{ then } \textbf{return} \, \text{``no \textsc{TM}-attacker exists"}\)
	\end{tcolorbox}
}{} 

Otherwise, choose a violating run (or bad prefix) $r \in R$ arbitrarily.  We proceed as we did for $\exists$ASP but with three key differences. First, we \rem{define $\alpha_i$ by projecting $\alpha$ onto}\add{use} $\textsc{RDaisy}(Q_i)$ as opposed to $\textsc{Daisy}(Q_i)$.  
\rem{Second, for each $0 \leq i \leq m$, instead of using $A_i^{\beta}$ if $r$ is a lasso, or adding self-loops to the final state if $r$ is a bad prefix, we simply glue $A_i^{\alpha}$ to $Q_i$ by setting the last state of $A_i^{\alpha}$ to be the initial state of $Q_i$.  (The result of gluing is a process; the initial state of $A_i^{\alpha}$ is its only initial state.)}
\add{Second, for each $0 \leq i \leq m$, instead of using $A_i^{\alpha}$ and $A_i^{\beta}$, we compute a program $A_i^{\texttt{rec}}$ which replays the actions taken by $\textsc{RDaisy}(Q_i)$ until $\texttt{recover}_i$ is satisfied.  We glue this program to the recovery $Q_i$.}  
Third, instead of using self-loops to input-enable input states, we use input transitions to the initial state of $Q_i$.  This ensures the pre-recovery portion is a DAG.  Then we return~$\vec{A} = (A_i)_{i = 0}^m$.

	\techreport{
	\begin{tcolorbox}[colback=lightgreen!20,colframe=lightgreen,sharp corners]

	\texttt{3.} Choose $r \in R$ arbitrarily.  For convenience,
\(s_0 \xrightarrow[]{l_0} s_1 \xrightarrow[]{l_1} s_2 \xrightarrow[]{l_2} \dots = r\).

	\texttt{4.} Let $\vec{A}\,:= ()$ be initially empty.

	\texttt{5.} For each $0 \leq i \leq m$:

	\begin{description}

		\item \texttt{5.1.} Let $\textsc{preRecovery}_i$ be the shortest prefix of $r$ ending in 
    a state $s$ satisfying $\texttt{recover}_i \in L(s)$.
    For convenience, $s_0 \xrightarrow[]{l_0} s_1 \xrightarrow[]{l_1} \dots \xrightarrow[]{l_d} s_d = \textsc{preRecovery}_i$.

    \item \texttt{5.2.} Let $A_i^{\texttt{rec}}$ begin as the trivial process with one state $a_0^i$ and no transitions.

    \item \texttt{5.3.} Let $\textsc{NS}$ be a temporary variable initialized to $0$.

    \item \texttt{5.4.} For each $0 \leq i \leq d$, if $l_i$ is in the interface of $Q_i$, 
        then add a new state $a_{\textsc{NS} + 1}$
        and new transition $a_{\textsc{NS}} \xrightarrow[]{l_i} a_{\textsc{NS} + 1}$
        to the process $A_i^{\texttt{rec}}$, and increment $\textsc{NS}$.

    \item \texttt{5.5.} For each input state $a$ of $A_i$,
        let $x?$ be the label on the outgoing input transition $a \xrightarrow[]{x?} a'$
        from $a$; then for each $y \neq x$ in the inputs of $Q_i$,
        add the transition $a \xrightarrow[]{y?} a_{\textsc{NS}}$ to $A_i^{\texttt{rec}}$.

    \item \texttt{5.6.}  Let $A_i$ be initially equal to $A_i^{\texttt{rec}}$.

    \item \texttt{5.7.} Replace the state $a_{\textsc{NS}}$ of $A_i$ 
    										with the initial state $q_0^i$ of $Q_i$, 
    										and add the rest of $Q_i$ to $A_i$.

    \item \texttt{5.8.} Give $A_i$ the same interface as $Q_i$.

    \item \texttt{5.9.} Append $A_i$ to $\vec{A}$.

	\end{description}

	\texttt{6.} Return $\vec{A}$.
	\end{tcolorbox}
	}

\add{
	Next we prove that our solution is sound, and that
	our solution is complete for the subset of attackers with
	recovery which eventually recover, which we refer to as \emph{terminating}.}
\begin{definition}[Terminating Attacker]
    Let $\vec{A}$ be a \textsc{TM}-attacker with recovery.
    If $P \parallel A_0 \parallel \dots \parallel A_m$ has at least one run
    violating $\phi$ in which each attacker component $A_i$ reaches the 
    initial state $q_0^i$ of its
    corresponding recovery~$Q_i$,
    then we say that~$\vec{A}$ is a \emph{Terminating Attacker}.
\end{definition}
\add{In order to prove these results, we need to first prove some smaller Lemmas.
Soundness means that given a threat model~\textsc{TM}$=(P,(Q_i)_{i=0}^m,\phi)$,
    if the solution returns a result~$\vec{A}$,
    then~$\vec{A}$ is a \textsc{TM}-attacker with recovery.
Given some such~\textsc{TM} and~$\vec{A}$,
    we show in Lemma~\ref{lem:attackerSize} 
        that~$\vec{A}$ is the correct size (consisting of $m + 1$ components);
    in Lemma~\ref{lem:attackReqs}
        that~$\vec{A}$ satisfies the syntactic requirements for an attacker;
    and in Lemma~\ref{lem:dag}, that~$\vec{A}$ further satisfies the syntactic
    requirements for an attacker with recovery.
We conclude in Theorem~\ref{thm:sound} by showing that~$\vec{A}$
    also satisfies the semantic requirements for an attacker with recovery.
    (Indeed, $\vec{A}$ is terminating.)
    This suffices to show that our solution is sound.}

\add{Subsequently, we prove that our solution is complete for terminating attackers,
	meaning that our solution returns a result~$\vec{A}$ precisely when 
	a terminating \textsc{TM}-attacker exists.
In order to prove that our solution is complete for terminating attackers, 
	we first show that every terminating \textsc{TM}-attacker
	is discoverable using daisies with recovery, in Lemma~\ref{lem:cole}.
Next, in Theorem~\ref{thm:complete}, we use Lemma~\ref{lem:cole} to show that
 if a terminating attacker exists
then the solution returns some result~$\vec{A}$, 
and if no terminating attacker exists
then the solution states as much and terminates.
This suffices to show that our solution is complete for the terminating subclass
of attackers with recovery.
}

\begin{lemma}[\add{Attacker Size}]
\label{lem:attackerSize}
\add{Given a threat model $\textsc{TM}=(P,(Q_i)_{i=0}^m,\phi)$,
if the solution above returns a tuple $\vec{A}$,
it will have $m+1$ components.}
\end{lemma}
\begin{proof}
\add{If the solution above returns a tuple $\vec{A}$, then it must have progressed
to line~3.
In this case, we know that $R$ is non-empty in line~2.
Then in line~5, the solution iterates $m + 1$ times.
In each iteration, the solution progresses through lines 5.1. through 5.9.
Importantly, each of these lines terminates.
(Line~5.1. terminates because $r \models \F \texttt{recover}_i$;
    for the other lines termination is obvious.)
At line~5.9., the component $A_i$ is appended to $\vec{A}$.
So after~$m + 1$ iterations of~5.1. through~5.9., there will be~$m + 1$
components $A_0, ..., A_m$ in the tuple~$\vec{A}$,
and we are done.}
\end{proof}

\begin{lemma}[\add{Syntactic Requirements for an Attacker}]
\label{lem:attackReqs}
\add{If the solution above yields a tuple $(A_i)_{i=0}^m$,
    then each $A_i$ is a deterministic process
    with no atomic propositions
    and the same inputs and outputs as the corresponding $Q_i$.}
\end{lemma}
\begin{proof}
\add{First we will show that each $A_i$ is deterministic, i.e., that each $A_i$
satisfies determinism conditions~(i) through~(iv).}
\begin{enumerate}[(i)]
    \item \add{Let $T_i$ refer to the transitions of $A_i$.
    Suppose $T_i$ contains some transitions $s \xrightarrow[]{l} a$
    and $s \xrightarrow[]{l} b$.  If $s$ is in recovery, then $a$ and $b$ must
    likewise be in recovery as there is no way to leave recovery.  In this case,
    since the recovery process~$Q_i$ is assumed to be deterministic, it therefore
    follows that $a = b$.  What if $s$ is not in recovery?  Then the 
    transition~$s \xrightarrow[]{l} a$ was either added at line~5.4. or line~5.5.
    For a contradiction assume $a \neq b$.}
    \begin{itemize}
        \item \add{It cannot be the case that both transitions were added at line~5.4.
        because, at the end of line~5.4., the source state $a_{\textsc{NS}}$
        is incremented, violating the assumption that these distinct 
        transitions begin at some state~$s$.}
        \item \add{If the former was added in~5.4. and the latter in~5.5., then 
        the $s$ must be an input-state.
        In this case, by construction the latter could not have the same label
        as the former, so we've arrived at a contradiction.}
        \item \add{If the latter was added in 5.4. and the former in~5.5. then 
        the result is symmetric with the previous case.}
        \item \add{If both were added in~5.5. then they must have different labels,
        contradicting our assumption that they share the label~$l$.}
    \end{itemize}
    \add{So in all cases, assuming these are distinct transitions, we arrive at a 
    contradiction.  Therefore they must not be distinct transitions, so $a = b$.
    We conclude that the transition relation $T_i$ of $A_i$ can be expressed as
    a (possibly partial) function $S \times (I \cup O) \to S$.
    \item Let $s$ be a non-deadlock state in $A_i$.
    If $s$ is in recovery then, as the recovery is deterministic, 
    $s$ is either an input state or an output state, but not both.
    Else $s$ is in the attack component.  Consider this case.
    Clearly at line~5.4., some transition from $s$ to either the next
    attack state or recovery was added, in order to simulate a corresponding
    transition taken by the daisy with recovery in the violating run $r$.
    If $s$ is an input state, then it has one outgoing input transition added at
    line~5.4., and potentially more outgoing input transitions added at 
    line~5.5., but no other outgoing transitions.
    If $s$ is an output state, then it has one outgoing output transition added
    at line~5.4., and no other outgoing transitions.
    Either way, $s$ is either an input state or an output state, but not both.}
    \item \add{Every input state in recovery is input-enabled because we assume the
    recovery is deterministic.  Every input state before recovery is input-enabled
    because of line~5.5.}
    \item \add{Each output state in recovery has only one outgoing transition because
    we assume recovery is deterministic.  Each output state before recovery has
    only one outgoing transition, as transitions are only added (once) in 
    line~5.4. or (never, if the state is an output state) in line~5.5.
    So, every output state in~$A_i$ has only one outgoing transition.}
\end{enumerate}
\add{We conclude that each~$A_i$ is deterministic.}

\add{At no point in the solution do we assign any atomic propositions to any~$A_i$.
Implicitly, each~$A_i^{\text{rec}}$ is initialized without atomic propositions, in line~5.2.
(as ``the trivial process'').
Since \textsc{TM} is a threat model, the $Q_i$s do not have atomic
propositions.
Hence adding~$Q_i$ to~$A_i^{\text{rec}}$ does not add atomic propositions,
so the~$A_i$s have no atomic propositions.}

\add{Every transition in~$A_i$ is added in line~5.4., added in line~5.5., or
a transition in~$Q_i$.  In the first case, the transition is labeled using an input
or an output of~$Q_i$.  In the second case, the transition is labeled using an input of~$Q_i$.
In the third case, the transition is a transition in~$Q_i$, 
    and thus in the interface of~$Q_i$.
So every transition in~$A_i$ is in the interface of~$Q_i$.
We conclude that each~$A_i$ has the same inputs and outputs as the corresponding~$Q_i$.}

\add{
	Overall, we conclude that if the solution above 
	yields a tuple $(A_i)_{i=0}^m$,
    then each $A_i$ is a deterministic process
    with no atomic propositions
    and the same inputs and outputs as the corresponding $Q_i$.
}
\end{proof}

\begin{lemma}[\add{Syntactic Requirements for an Attacker with Recovery}]
\label{lem:dag}
\add{If the solution above yields a tuple $(A_i)_{i=0}^m$,
    then each $A_i$ consists of a finite DAG ending at the recovery~$Q_i$.}
\end{lemma}
\begin{proof}
\add{Suppose the solution above yields a tuple $\vec{A}$$=(A_i)_{i=0}^m$.
Choose a natural number $0 \leq i \leq m$ arbitrarily.
We want to show that $A_i$ consists of a finite DAG ending at the recovery~$Q_i$.
Transitions outside of recovery in $A_i$ are added in lines 5.4. and 5.5.
The number of such transitions added is linear in the product of the 
size of the interface of $Q_i$ and the length of the finite prefix $\textsc{preRecovery}_i$.
Hence, the part of $A_i$ outside recovery is clearly finite.
But is it a DAG?
The transitions added by line 5.4. form a line from $a_0^i, a_1^i, \dots, a_{d - 1}^i, q_0^i$.
This line is totally ordered by the incrementing subscript on the state name, and is bounded
in length by the size of $\textsc{preRecovery}_i$, and ends in recovery because of line~5.7.
Every other pre-recovery transition is added by line 5.5. and points to $q_0^i$.
So the pre-recovery portion is a DAG ending in $q_0^i$, the initial state of $Q_i$.  As $0 \leq i \leq m$ was arbitrary, it follows that the pre-recovery portion of each $A_i$ is a DAG ending in the corresponding $q_0^i$; so the lemma holds and we are done.}
\end{proof}

\begin{theorem}
\label{thm:sound}
\add{The solution above is sound.}
\end{theorem}
\begin{proof}
\add{Suppose given some threat model $\textsc{TM}=(P,(Q_i)_{i=0}^m,\phi)$,
the solution above returns $\vec{A}$.
For convenience, define \textsc{AttackedSystem} and \textsc{DaisySystem} as
we did previously.
We need to show that $\vec{A}$ is, in fact, a \textsc{TM}-attacker
with recovery.
By Lemma~\ref{lem:attackerSize}, $\vec{A}$
    is a tuple $(A_i)_{i=0}^m$
    having $m + 1$ components.
By Lemma~\ref{lem:attackReqs}, each component $A_i$ 
    is a deterministic process with no
atomic propositions and the same inputs and outputs as the corresponding $Q_i$.
And by Lemma~\ref{lem:dag}, each component $A_i$ 
    consists of a finite DAG ending at the
initial state of the corresponding $Q_i$.
All that remains is to show that
    \textsc{AttackedSystem}
    has at least one run
    violating $\phi$.}

\add{Let $r = \textbf{s}_0 \xrightarrow[]{l_0} \textbf{s}_1 \xrightarrow[]{l_1} \dots$ 
be the run of \textsc{DaisySystem} which was selected in line~3. of our solution.
(Presumably we made it to line 3., since the solution returned an attacker~$\vec{A}$.)
We will show that \textsc{AttackedSystem} can simulate the run~$r$ with some run~$r^{(m)}$ that violates~$\phi$.
We will make an inductive argument, albeit, one that terminates after $m + 1$ 
steps.
Let $\zeta(i)$ be the inductive claim:}
    \[\begin{aligned}
    \zeta(i) := &
    \, \, \exists \, r^{(i)} \in \text{runs}\Big(
        P \parallel A_0 
          \parallel \dots 
          \parallel A_i 
          \parallel \textsc{RDaisy}(Q_{i + 1}) 
          \parallel \dots 
          \parallel \textsc{RDaisy}(Q_m)
    \Big) \\
    & \text{ such that }
    r^{(i)} \centernot{\models} \phi \text{ and in $r^{(i)}$, all the $A_i$s eventually recover. }
    \end{aligned}
    \]
\add{In the base case, $i = -1$, and the system is \textsc{DaisySystem}.
Setting $r^{(-1)} = r$, we see that \textsc{DaisySystem}
   satisfies these conditions, 
   thus $\zeta(-1) = \emph{true}$.
We now move on to the inductive step.}

\add{Assume $\zeta(i)$ holds.
If $i = m$, then the entire theorem holds and we are done.
Assume $i < m$ and let $r^{(i)}$ be the run discovered in the previous
iteration of this argument.}

\vspace{\baselineskip}
\noindent \add{\textbf{Case 1:} Suppose that in $r^{(i)}$, 
    the process $\textsc{RDaisy}(Q_{i + 1})$ begins in the 
    recovery state $q_0^{i + 1}$.
Since the daisy portion of $\textsc{RDaisy}(Q_{i + 1})$
is unreachable from the recovery portion $Q_{i + 1}$,
and since $Q_{i + 1}$ uses all of its inputs and outputs,
it immediately follows that we could delete the daisy portion of 
$\textsc{RDaisy}(Q_{i + 1})$
entirely and still find the same run $r^{(i)}$.
For convenience, let $Q_{i + 1}'$ denote the result of this deletion, i.e.,
$Q_{i + 1}'$ is the same as $Q_{i + 1}$
except that its initial state is additionally labeled \texttt{recover}$_{i + 1}$.
Define the following three systems:}
\[
\begin{aligned}
\textsf{S1} := \, & 
P \parallel A_0 
  \parallel \dots 
  \parallel A_i
  \parallel Q_{i + 1}'
  \parallel \textsc{RDaisy}(Q_{i + 2})
  \parallel \dots
  \parallel \textsc{RDaisy}(Q_m) \\
\textsf{S2} := \, & 
P \parallel A_0 
  \parallel \dots 
  \parallel A_i
  \parallel Q_{i + 1}
  \parallel \textsc{RDaisy}(Q_{i + 2})
  \parallel \dots
  \parallel \textsc{RDaisy}(Q_m) \\
\textsf{S3} := \, &
P \underbrace{\parallel A_0 
  \parallel \dots 
  \parallel A_i}_{\text{empty list if }i=-1}
  \parallel A_{i + 1}
  \parallel \textsc{RDaisy}(Q_{i + 2})
  \parallel \dots
  \parallel \textsc{RDaisy}(Q_m)
\end{aligned}
\]
\add{So 
modulo the propositions
    \texttt{recover}$_0$ through \texttt{recover}$_{i + 1}$, 
    $r^{(i)}$ is also a run of \textsf{S1},
        and thus also \textsf{S2}.
But in this case, the solution will return $A_{i + 1} = Q_{i + 1}$,
so it follows that $r^{(i)}$ is 
    (again, modulo the \texttt{recover} propositions)
also a run of~\textsf{S3}.
Call this run~$r^{(i + 1)}$.}
 
\vspace{\baselineskip}
\noindent \add{\textbf{Case 2:}
Suppose that in $r^{(i)}$, the process 
    $\textsc{RDaisy}(Q_{i + 1})$ 
    begins in the 
    initial daisy state $d_0^{i + 1}$.  
    Notice that other than Case~1, this is the only
remaining possibility, since the initial states of~\textsc{RDaisy}$(Q_{i + 1})$
are~$q_0^{i + 1}$ and~$d_0^{i + 1}$.
Let $t_0,...,t_j$ be the sequence of transitions that $\textsc{RDaisy}(Q_{i + 1})$
takes from $d_0^{i + 1}$ in the prefix $\textsc{preRecovery}_{i + 1}$ 
defined at line~5.1.
of our solution.
Observe the following:}
\begin{enumerate}[(I)]
    \item \add{There exists a sequence of consecutive transitions $t_0',...,t_j'$
    in $A_{i + 1}$ such that $t_0'$ begins in $a_0^{i + 1}$, 
    $t_j'$ ends in $q_0^{i + 1}$,
    and each $t_i'$ has the same label as the corresponding $t_i$.
    This holds because of line~5.4. in our solution.}
    \item \add{Suppose $\textsc{RDaisy}(Q_{i + 1})$ stayed still during 
    some transition~$t$
    in the prefix $\textsc{preRecovery}_{i + 1}$.
    Then regardless of its state, $A_{i + 1}$ would likewise stay still during any
    system transition $t'$ having the same label as $t$.
    This holds because in order for $\textsc{RDaisy}(Q_{i + 1})$ to stay still during
    the transition~$t = \textbf{s} \xrightarrow[]{l} s'$, it would need to be the
    case that $l$ does not match any of the self loops or recovery transitions
    stemming from the initial state $d_0^{i + 1}$ 
    of the daisy portion of $\textsc{RDaisy}(Q_{i + 1})$.
    In which case, $l$ is not in the interface of $Q_{i + 1}$, 
        and thus, by Lemma~\ref{lem:attackReqs},
        it is also not in the interface of $A_{i + 1}$.
        So $A_{i + 1}$ would likewise stay still 
        during any system transition with label~$l$.}
\end{enumerate}
\add{Since both these claims hold, 
    the augmented system can simulate the prefix
$\textsc{preRecovery}_{i + 1}$.
But at the end of this prefix (and its simulation)
    $\textsc{RDaisy}(Q_{i + 1})$ and $A_{i + 1}$ have both recovered, 
    at which point they act identically.
    Keeping in mind that processes must use all their inputs and outputs,
        we conclude that the augmented system can simulate the entire run~$r^{(i)}$.
Call the simulating run of the augmented system $r^{(i + 1)}$.}

\add{Since $d_0^{i + 1}$ and the attacker states $a_i^{i + 1}$ lack atomic propositions,
it follows that modulo the propositions \texttt{recover}$_0$ 
    through \texttt{recover}$_{i + 1}$,
    $r^{(i)}$ and $r^{(i + 1)}$ will induce identical traces.
    By construction, the propositions \texttt{recover}$_0$
    through \texttt{recover}$_{i + 1}$
    does not appear in the property~$\phi$,
    thus both of these traces violate~$\phi$.
In both~$r^{(i)}$ and~$r^{(i + 1)}$,
    the processes 
    $P, A_0, \dots, A_i, \textsc{RDaisy}(Q_{i + 2}), \dots, \textsc{RDaisy}(Q_m)$
    progress through the same sequence of transitions
    and in the same order.
Since we assumed these components recovered in $r^{(i)}$ 
    clearly they also recover in $r^{(i + 1)}$.
Can we say something similar with respect to $\textsc{RDaisy}(Q_{i+1})$ and $A_{i+1}$?  Yes: In Case 1, the attacker component $A_{i + 1}$ begins in recovery, while in Case 2, Line~5.4. of our solution assures the
    transitions $t_0'$ through $t_j'$
    are consecutive, line~5.6. assures that~$t_j'$ ends in recovery,
    and we just proved that transitions $t_0'$ through $t_j'$ are taken by $A_{i + 1}$
    in the run $r^{(i + 1)}$.
So in all cases, all the components recover in both runs.
So the simulation requirements hold, i.e., $\zeta(i + 1) = \emph{true}$.}

\add{Since we set $\zeta(-1) = \emph{true}$,
and proved that for $-1 \leq i < m$,
    $\zeta(i)$ implies $\zeta(i + 1)$, we conclude by way of (finite)
    induction 
    that $\vec{A}$ is a \textsc{TM}-attacker with recovery.
This suffices to show our solution is sound, and we are done.}
\end{proof}

\begin{lemma}[\add{Terminating Attackers Induce Runs Discoverable by Daisies with Recovery}]
\label{lem:cole}
\add{Let $\textsc{TM}=(P,(Q_i)_{i=0}^m,\phi)$ be a threat model and $\vec{A}$ a terminating
\textsc{TM}-attacker.  
Let $\psi = (\bigwedge_{0 \leq i \leq m} \F \texttt{recover}_i ) \implies \phi$.
Then there exists a run $r' \in \text{runs}(P \parallel \textsc{RDaisy}(Q_0) \parallel \dots \parallel \textsc{RDaisy}(Q_m))$ such that $r' \centernot{\models} \psi$.}
\end{lemma}
\begin{proof}
\add{For convenience, define \textsc{AttackedSystem} and \textsc{DaisySystem} as follows:}
\[\begin{aligned}
\textsc{AttackedSystem} & = P \parallel A_0 \parallel \dots \parallel A_m \\
\textsc{DaisySystem} & = P \parallel \textsc{RDaisy}(Q_0) \parallel \dots \parallel \textsc{RDaisy}(Q_m)
\end{aligned}\]
\add{Since~$\vec{A}$ is terminating,
    $\textsc{AttackedSystem}$
    has a run~$r$
    in which every~$A_i$ visits the initial state~$q_0^i$ of its recovery~$Q_i$,
    and~$r \centernot{\models} \phi$.
    Let~$r$ be some such run.
    We will show that $\textsc{DaisySystem}$ can ``simulate'' the run~$r$, i.e.,
    that $\textsc{DaisySystem}$
    has some run~$r'$
    such that the only differences between~$r$ and~$r'$ 
    are the names of the states the~$A_i$s or~$\textsc{RDaisy}(Q_i)$s 
    progress through prior to recovering.}

\add{First, consider $A_0$:
    let $r_0$ be the sequence of transitions that $A_0$ takes in the run $r$.
Let $k_0 \geq 0$ be the smallest natural number such that
    in the $k_0^{\text{th}}$ step of $r_0$, $A_0$ is in the initial state $q_0^0$
    of its recovery $Q_0$.
If $k_0 = 0$, then $A_0$ begins in recovery; 
    but once $A_0$ enters the recovery state $q_0^0$ it will remain in the states
    $S_{Q_0}$ of $Q_0$ forever.
So in this case, $A_0$ simply acts like $Q_0$, and therefore
    $P \parallel \textsc{RDaisy}(Q_0) \parallel A_1 \parallel \dots \parallel A_m$
    can simulate the run $r$.
On the other hand, what if $k_0 > 0$?
Then the first $k_0 - 1$ transitions in $r_0$ are taken within the finite DAG
part of $A_0$; the $k_0^{\text{th}}$ transition goes from the finite DAG portion
to the recovery $q_0^0$; and all the subsequent transitions take place in $Q_0$
beginning at its initial state $q_0^0$.
The first $k_0 - 1$ transitions can all be simulated via self-loops in the daisy
part of the \textsc{RDaisy}$(Q_0)$, e.g.,
$a_j^0 \xrightarrow[]{x?} a_{j + 1}^0$
can be simulated via $d_0^0 \xrightarrow[]{x?} d_0^0$.
The $k_0^{\text{th}}$ transition can be simulated via a transition into recovery,
e.g., $a_{k_0}^0 \xrightarrow[]{y!} q_0^0$
    can be simulated via 
    $d_0^0 \xrightarrow[]{y!} q_0^0$.
Every subsequent transition will be the same, taking place in the recovery $Q_0$.
This suffices to show that
    $P \parallel \textsc{RDaisy}(Q_0) \parallel A_1 \parallel \dots \parallel A_m$
    can simulate the run $r$.
By applying this same argument inductively on the remaining $A_i$s, we find that
\(\textsc{DaisySystem}\)
can simulate~$r$.}

\add{Let~$r'$ be a run of  
\(\textsc{DaisySystem}\)
simulating~$r$.
Then for each $0 \leq i \leq m$, by construction, there exists some~$k_i \geq 0$
such that~$\textsc{RDaisy}(Q_i)$ recovers at step~$k_i$ of~$r_i'$.
Therefore $r' \models \bigwedge_{0 \leq i \leq m} \F \texttt{recover}_i$.
Moreover,~$P$ progresses through the same sequence of states in $r'$ as it does in~$r$.
Since the~\texttt{recover}$_i$ propositions do not occur in the property~$\phi$,
and the daisies with recovery do not have any other atomic propositions,
and the trace induced by $r$ violates~$\phi$,
it therefore follows that the trace induced by $r'$ likewise violates~$\phi$.
Therefore $r'$ violates~$\phi$.
We conclude that $r'$ violates~$\psi$, hence 
\(\textsc{DaisySystem}
\centernot{\models} \psi\)
and we are done.}
\end{proof}

\begin{theorem}
\label{thm:complete}
\add{The solution above is complete, for terminating attackers.}
\end{theorem}
\begin{proof}
\add{First we show that the solution terminates on all inputs.
We assume the model checker terminates, so we always progress to line~2.
If $R = \emptyset$ then we terminate at line~2.
If $R \neq \emptyset$, then we progress to line~3.
Next we loop over $0 \leq i \leq m$.
Each step of the loop terminates in runtime bounded by the length of the shortest
prefix of the violating run ending in recovery for the corresponding attacker
component.  Since the run is assumed to satisfy $\F \texttt{recover}_i$ for each
$0 \leq i \leq m$, this shortest prefix must exist.  Thus the loop completes
in finite time.
Once the loop completes we return the tuple $\vec{A}$ and we are done.
We conclude that the solution terminates on all inputs.}

\add{But, does the solution at least attempt to return an attacker, 
    whenever a terminating attacker exists?
And when no terminating attacker exists, does it state as much?
We will answer both questions affirmatively, beginning with the first.}

\add{Suppose a terminating attacker exists.
In Lemma~\ref{lem:cole}, we proved that 
    terminating attackers induce runs discoverable by daisies with recovery.
It follows that at line~2 of our solution
    the set $R$ will be non-empty.
Thus the solution will at least attempt to return an attacker,
    i.e., it will progress past line~3.}

\add{Now we address the second question.
Suppose no terminating attacker exists.
Then the set $R$ will be empty, and the solution will terminate at line 2.
As both questions are answered in the affirmative, we conclude that the solution
is complete for terminating attackers.}
\end{proof}

\section{Case Study: TCP}\label{sec:tcp}
\techreport{}{\vspace{-1em}}
\techreport{Below we first describe our implementation then the details of our case study (TCP).}{}

\textbf{Implementation}
We implemented our solutions in an open-source tool called 
\textsc{Korg}\techreport{\footnote{Named after the Korg microKORG synthesizer, with its dedicated ``attack" control on Knob 3.  Code and models are freely and openly available at \url{https://github.com/maxvonhippel/AttackerSynthesis}.}}{}.
We say an attacker $\vec{A}$ for a threat model $\textsc{TM} = (P, (Q_i)_{i = 0}^m, \phi)$ is a \emph{centralized attacker} if $m = 0$, or a \emph{distributed attacker}, otherwise.  
In other words, a centralized attacker has only one attacker component $\vec{A} = (A)$, whereas
a distributed attacker has many attacker components $\vec{A} = (A_i)_{i = 0}^m$.
\textsc{Korg} handles $\exists$ASP and R-$\exists$ASP for liveness and safety properties for a centralized attacker.
\textsc{Korg} is implemented in 
\textsc{Python 3} 
and uses the model-checker \textsc{Spin}~\cite{Holzmann03} as its underlying verification engine.

\techreport{
	\begin{table}[H]
	\title{Implemented Solutions}
	\centering
	\begin{tabular}{|r|c|c|c|c|}
	  \hline
	  Attacker: & \multicolumn{2}{c|}{Centralized} & \multicolumn{2}{c|}{Distributed} \\\hline
	  Recovery? & $\exists$ Problem & $\forall$ Problem & $\exists$ Problem & $\forall$ Problem \\\hline
	  With    & \checkmark & \xmark & \xmark & \xmark \\
	  Without & \checkmark & \xmark & \xmark & \xmark \\\hline
	\end{tabular}
	\caption{\checkmark \, or \xmark \, denote if the solution to a problem is or is not implemented in \textsc{Korg}, for both liveness and safety properties.}
	\end{table}
}{}

\textbf{TCP} is a fundamental Internet protocol consisting of three stages: connection establishment, data transfer, and connection tear-down.  We focus on the first and third stages, which jointly we call the connection routine.  Our approach and model (see Fig. \ref{fig:TCPblockDiagram}, \ref{TCPdiagram}) are inspired by \textsc{Snake} \cite{IEEE15}.  Run-times and results are listed in Table~\ref{benchmarksTable}.

\subsubsection{Threat Models}

\begin{figure}\techreport{}{[H]}
\techreport{}{\vspace{-1.5cm}}
\centering
\techreport{
	\begin{adjustbox}{max totalsize={1.0\textwidth}{.7\textheight},center}
}{
	\begin{adjustbox}{max totalsize={0.9\textwidth}{.7\textheight},center}
}
\begin{tikzpicture}
\draw[draw=black] (0, 0) rectangle ++(4, 2.5);
\draw[draw=black] (5, 0) rectangle ++(1.5, 1);
\draw[draw=black] (5, 1.5) rectangle ++(1.5, 1);
\draw[draw=black,dashed] (7.5, 0) rectangle ++(4, 2.5);
\draw[draw=black] (12.5, 0) rectangle ++(1.5, 1);
\draw[draw=black] (12.5, 1.5) rectangle ++(1.5, 1);
\draw[draw=black] (15, 0) rectangle ++(4, 2.5);
\draw[straight] (4, 0.5) to (5, 0.5);
\draw[straight] (6.5, 0.5) to (7.5, 0.5);
\draw[straight] (7.5, 2) to (6.5, 2);
\draw[straight] (5, 2) to (4, 2);

\draw[straight] (11.5, 2) to (12.5, 2);
\draw[straight] (14, 2) to (15, 2);
\draw[straight] (15, 0.5) to (14, 0.5);
\draw[straight] (12.5, 0.5) to (11.5, 0.5);
\node[] at (2, 1.25) {\large \textsc{Peer 1}};
\node[] at (5.75, 0.5) {\large \textsc{1toN}};
\node[] at (5.75, 2) {\large \textsc{Nto1}};
\node[] at (9.5, 1.25) {\large \textsc{Network}};
\node[] at (13.25, 0.5) {\large \textsc{2toN}};
\node[] at (13.25, 2) {\large \textsc{Nto2}};
\node[] at (17, 1.25) {\large \textsc{Peer 2}};
\end{tikzpicture}
\end{adjustbox}
\caption{TCP threat model block diagram.  Each box is a process.  
An arrow from process $P_1$ to process $P_2$ denotes that a subset of the outputs of $P_2$ are exclusively inputs of $P_1$. 
 \textsc{Peer}s 1 and 2 are TCP peers.
A \emph{channel} is a directed FIFO queue of size one with the ability to detect fullness.  A full channel may be overwritten.  \textsc{1toN}, \textsc{Nto1}, \textsc{2toN}, and \textsc{Nto2} are channels.  Implicitly, channels relabel: for instance, \textsc{1toN} relabels outputs from \textsc{Peer 1} to become inputs of \textsc{Network}; 
\textsc{Network} transfers messages between peers via channels, and is the vulnerable process.}
\label{fig:TCPblockDiagram}
\techreport{}{\vspace{-1cm}}
\end{figure}

\techreport{We use channels to build asynchronous communication out of direct (rendezvous) communication.}{}
Rather than communicating directly with the \textsc{Network}, the peers communicate with the channels, and the channels communicate with the \textsc{Network}, allowing us to model the fact that packets are not instantaneously transferred in the wild.  We use the shorthand $\textsc{chan} ! \textsc{msg}$ to denote the event where \textsc{msg} is sent over a channel \textsc{chan}; it is contextually clear who sent or received the message.  
\techreport{TCP exists in the \emph{Transport Layer} of the internet, an upper layer reliant on the lower \emph{Link Layer} and \emph{Internet Layer}.}{} 
We abstract the lower network stack layer TCP relies on with \textsc{Network}, which passes messages between $\textsc{1toN} \parallel \textsc{2toN}$ and $\textsc{Nto1} \parallel \textsc{Nto2}$.  We model the peers symmetrically.

\begin{figure}\techreport{}{[H]}
\begin{adjustbox}{max totalsize={.9\textwidth}{.7\textheight},center}
\centering
\begin{tikzpicture}
\def\rectnode[#1]#2(#3)#4[#5]#6 {
	\draw node[
		append after command={[rounded corners=2pt](#3.west)|-(#3.north)},
      	append after command={[rounded corners=2pt](#3.north)-|(#3.east)},
      	append after command={[rounded corners=2pt](#3.east)|-(#3.south)},
      	append after command={[rounded corners=2pt](#3.south)-|(#3.west)},
      	#1
	]#2(#3)#4[#5]#6
}

\node[] (empty) {};

\rectnode[] (closed)  [right=of empty        ] {Closed};
\rectnode[] (end)     [right=of closed       ] {End};
\rectnode[] (listen)  [below=of end          ] {Listen};
\rectnode[] (synsent) [above=of end          ] {\texttt{SYN} Sent};
\rectnode[] (i0)      [above right=of synsent] {$i_0$};
\rectnode[] (i1)      [below=of i0           ] {$i_1$};
\rectnode[] (i2)      [right=of listen       ] {$i_2$};
\rectnode[] (synrec)  [above right=of i2     ] {\texttt{SYN} Received};
\rectnode[] (establ)  [above=of synrec       ] {Established};
\rectnode[] (i3)      [right=of establ       ] {$i_3$};
\rectnode[] (finw1)   [right=of synrec       ] {\texttt{FIN} Wait 1};
\rectnode[] (closew)  [right=of i3           ] {Close Wait};
\rectnode[] (finw2)   [right=of finw1        ] {\texttt{FIN} Wait 2};
\rectnode[] (i4)      [below=of finw1        ] {$i_4$};
\rectnode[] (closing) [below=of listen       ] {Closing};
\rectnode[] (i5)      [right=of i4           ] {$i_5$};
\rectnode[] (timew)   [below=of closing      ] {Time Wait};
\rectnode[] (lastack) [above=of i0           ] {Last \texttt{ACK}};

\draw[straight] (empty) to (closed);
\draw[straight] (closed) to (end);
\draw[looped] (closed) to[out=south east,in=north west] (listen);
\draw[looped,dashed] (listen) to[out=west,in=south] (closed); 
\draw[looped,dashed] (synsent) to[out=south,in=north east] (closed); 
\draw[looped] (closed) to[out=north,in=west,right] node {\textsc{snd}$!$\texttt{SYN}} (synsent);

\draw[straight] (synsent) to[left] node {\textsc{rcv}$?$\texttt{SYN\_ACK}} (i0);
\draw[looped] (synsent) to[out=south east,in=south west,below] node {\textsc{rcv}$?$\texttt{SYN}} (i1);
\draw[looped] (listen) to[out=south east,in=south west,below] node {\textsc{rcv}$?$\texttt{SYN}} (i2);

\draw[straight] (i0) to[right] node {\textsc{snd}$!$\texttt{ACK}} (establ);
\draw[looped] (i1) to[out=south east,in=north west,right] node {\textsc{snd}$!$\texttt{ACK}} (synrec);
\draw[looped] (i2) to[out=north,in=west,below right] node {\textsc{snd}$!$\texttt{SYN\_ACK}} (synrec);
\draw[straight] (synrec) to[right] node {\textsc{rcv}$?$\texttt{ACK}} (establ);

\draw[looped] (establ) to[out=north east,in=north west,above] node {\textsc{rcv}$?$\texttt{FIN}} (i3);
\draw[looped] (establ) to[out=east,in=north west,right] node {\textsc{snd}$!$\texttt{FIN}} (finw1);

\draw[looped] (finw1) to[out=north east,in=north west,above] node {\textsc{rcv}$?$\texttt{ACK}} (finw2);
\draw[straight] (finw1) to[left] node {\textsc{rcv}$?$\texttt{FIN}} (i4);
\draw[looped] (i3) to[out=north east,in=north west,above] node {\textsc{snd}$!$\texttt{ACK}} (closew);

\draw[looped] (closew) to[out=north,in=east,below left] node {\textsc{snd}$!$\texttt{FIN}} (lastack);
\draw[looped] (i4) to[out=south,in=east,above] node {\textsc{snd}$!$\texttt{ACK}} (closing);
\draw[straight] (finw2) to[right] node {\textsc{rcv}$?$\texttt{FIN}} (i5);
\draw[looped] (i5) to[out=south,in=east,above] node {\textsc{snd}$!$\texttt{ACK}} (timew);

\draw[looped] (lastack) to[out=west,in=north,right] node {\textsc{rcv}$?$\texttt{ACK}} (closed);
\draw[looped] (closing) to[out=west,in=south,right] node {\textsc{rcv}$?$\texttt{ACK}} (closed);
\draw[looped] (timew)   to[out=west,in=south] (closed);
\end{tikzpicture}
\end{adjustbox}
\caption{A TCP peer.  For $i = 1, 2$, if this is $\textsc{Peer }i$, then $\textsc{snd} := i\textsc{toN}$ and $\textsc{rcv} := \textsc{Nto}i$.  
All the states except $i_0, ..., i_5,$ and End are from the finite state machine in the TCP RFC \cite{TCPRFC}.  The RFC diagram omits the implicit states $i_0, ..., i_5$, instead combining send and receive events on individual transitions.  In the RFC, Closed is called a ``fictional state", where no TCP exists.  We add a state End to capture the difference between a machine that elects not to instantiate a peer and a machine that is turned off.  We label each state $s$ with a single atomic proposition $s_i$.  Dashed transitions are \emph{timeout transitions}, meaning they are taken when the rest of the system deadlocks. \label{TCPdiagram}}
\techreport{}{\vspace{-1cm}}
\end{figure}

Given a property $\phi$ about TCP, we can formulate a threat model $\textsc{TM}$ as follows, where we assume the adversary can exploit the lower layers of a network and ask if the adversary can induce TCP to violate $\phi$: 
\begin{equation}
\textsc{TM} = (\textsc{Peer 1} \parallel
			   \textsc{Peer 2} \parallel
			   \textsc{1toN} \parallel
			   \textsc{2toN} \parallel
			   \textsc{Nto1} \parallel
			   \textsc{Nto2},
			   (\textsc{Network}),
			   \phi)
\end{equation}
We consider the properties $\phi_1, \phi_2, \phi_3$, giving rise to the threat models $\textsc{TM}_1, \textsc{TM}_2, \textsc{TM}_3$.

\subsubsection{$\textsc{TM}_1$: No Half-Closed Connection Establishment}
The safety property $\phi_1$ says that if \textsc{Peer 1} is in Closed state, then \textsc{Peer 2} cannot be in Established state.
\begin{equation}
\phi_1 = \G ( \text{Closed}_1 \implies \neg \text{Established}_2 )
\end{equation}
\textsc{Korg} discovers an attacker that spoofs the active participant in an active-passive connection establishment (see message sequence chart in Fig. \ref{fig:EattackerTCPMSC}), as described in \cite{SNI97}.
\techreport{Note that our model does not capture message sequence numbers and 
in the real world the attacker also needs to guess the sequence number of the passive peer.}{}

\begin{figure}[H]
\begin{adjustbox}{max totalsize={.7\textwidth}{.4\textheight},center}
\centering
\begin{tikzpicture}
%
\draw[draw=gray,dashed] (2.2, 6.5) rectangle ++(1.6, 3.5);
\draw[draw=gray,dashed] (6.2, 6.5) rectangle ++(1.6, 3.5);
\draw[draw=gray,dashed] (11,   6.5) rectangle ++(2,   3.5);

\node[draw,rectangle] (peer1) at (3 ,9.5) {\textsc{Peer 1}};
\node[draw,rectangle] (attac) at (7 ,9.5) {$A$};
\node[draw,rectangle] (peer2) at (12,9.5) {\textsc{Peer 2}};

\node[] (closed10) at (3 ,9) {Closed};
\node[] (a00)      at (7 ,9) {$a_0$};
\node[] (closed20) at (12,9) {Closed};

\node[] (closed11) at (3 ,8.5) {Closed};
\node[] (a01)      at (7 ,8.5) {$a_1$};
\node[] (listen20) at (12,8.5) {Listen};

\node[] (closed12) at (3 ,8) {Closed};
\node[] (a02)      at (7 ,8) {$a_2$};
\node[] (listen21) at (12,8) {Listen};

\draw[straight] (a01) to node[above] {\texttt{ACK}} (closed11);
\draw[straight] (a02) to node[above] {\texttt{SYN}} (listen21);

\node[] (closed13) at (3 ,7.5) {Closed};
\node[] (a03)      at (7 ,7.5) {$a_3$};
\node[] (synRec20) at (12,7.5) {\texttt{SYN} Received};

\draw[straight] (synRec20) to node[above] {\texttt{SYN\_ACK}} (a03);

\node[] (closed13) at (3 ,7) {Closed};
\node[] (a03)      at (7 ,7) {$a_3$};
\node[] (established20) at (12,7) {Established};

\draw[straight] (a03) to node[above] {\texttt{ACK}} (established20);
	\end{tikzpicture}
\end{adjustbox}
\caption{Time progresses from top to bottom.
Labeled arrows denote message exchanges over implicit channels.  The property is violated in the final row; after this recovery may begin.  \add{Note that the initial \texttt{ACK} injected to Peer 1 is unnecessary.  Because \textsc{Spin} attempts to find counterexamples as quickly as possible, the counterexamples it produces are not in general minimal.}}
\label{fig:EattackerTCPMSC}
\end{figure}

\subsubsection{$\textsc{TM}_2$: Passive-Active Connection Establishment Eventually Succeeds}
The liveness property $\phi_2$ says that if it is infinitely often true that \textsc{Peer 1} is in Listen state while \textsc{Peer 2} is in \texttt{SYN} Sent state, then it must eventually be true that \textsc{Peer 1} is in Established state.
\begin{equation}
\phi_2 = (\G \F ( \text{Listen}_1 \land \text{\texttt{SYN} Sent}_2 ) )
		\implies \F\, \text{Established}_1
\end{equation}
\textsc{Korg} discovers an attack where 
\add{an \texttt{ACK} packet is injected to each peer, and}
a \texttt{SYN} packet from \textsc{Peer 2} is dropped.
The corresponding attacker code is given in the \textsc{Promela} language of  \textsc{Spin}
 in Fig.  \ref{fig:EattackerFiniteTM2code}.  \techreport{The attacker in Fig. \ref{fig:EattackerFiniteTM2code} also induces deleterious behavior not captured by violation of $\phi_2$, where the system deadlocks in \rem{$(\texttt{SYN}\text{ Sent}, \texttt{SYN}\text{ Received})$}\add{$(\text{Listen}, \texttt{SYN}\text{ Sent})$.}}

\techreport{
	\begin{figure}[H]
}{
	\begin{figure}
}
\centering
\begin{Verbatim}[tabsize=4]
Nto1 ! ACK; Nto2 ! ACK; 2toN ? SYN;
\end{Verbatim}
\caption{Body of \textsc{Promela} process for a TM$_2$-attacker \rem{with recovery }generated by \textsc{Korg}.  \add{The attacker injects an \texttt{ACK} packet to each peer.}  \textsc{Peer~2} transitions from Closed state to \texttt{SYN} Sent state and sends \texttt{SYN} to \textsc{Peer 1}.  The attacker drops this packet so that it never reaches \textsc{Peer~1}.  \add{Neither peer can process an \texttt{ACK} from its current state, and the system deadlocks.}  \rem{\textsc{Peer~1} then transitions back and forth forever between Closed and Listen states, and the property is violated.  Because \textsc{Spin} attempts to find counterexamples as quickly as possible, the counterexamples it produces are not in general minimal.}}
\label{fig:EattackerFiniteTM2code}
\end{figure}

\subsubsection{$\textsc{TM}_3$: Peers Do Not Get Stuck}
The safety property $\phi_3$ says that the two peers will never simultaneously deadlock outside their End states.  Let $S_i$ denote the set of states in Fig. \ref{TCPdiagram} for \textsc{Peer} $i$, and $S_i' = S_i \setminus \{ \text{End} \}$.
\begin{equation}
\phi_3 = \bigwedge_{s_1 \in S_1'} \bigwedge_{s_2 \in S_2'} \neg \F \G (s_1 \land s_2)
\end{equation}
For the problem with recovery, \textsc{Korg} discovers an attacker that \rem{selectively drops the \texttt{ACK} sent by \textsc{Peer 1} as it transitions from $i_0$ to Established state in an active/passive connection establishment routine, leaving \textsc{Peer 2} stranded in \texttt{SYN} Received state, leading to a violation of $\phi_3$}\add{injects a \texttt{SYN} packet to each peer, causing both to transition into \texttt{SYN} Received, where they get stuck (violating $\phi_3$)}.  
\rem{
\techreport{
	This type of packet-loss-induced deadlock is not uncommon in 
	real-world TCP implementations, e.g. 
	\cite{EE06,SE14,Arch07}.
}{
	Similar bugs exist in real-world implementations, e.g. \cite{Arch07}.
}}
\techreport{}{\vspace{-1em}}

\subsubsection{Performance}

Performance results for Case Study are given in Table~\ref{benchmarksTable}.  
\techreport{
	The discovered attackers either implement known attacks or reproduce known bugs in real TCP implementations.  
}{}
Our success criteria was to produce realistic attackers faster than an 
expert human could with pen-and-paper.  We discovered attackers in seconds
or minutes as shown in Table~\ref{benchmarksTable}.

\techreport{
	\begin{table}[H]
}{
	\begin{table}
}
	\title{Case Study Run-times \& Results}
	\centering
	\begin{tabular}{|c|c|c|c|c|}
	\hline
	\multirow{2}{*}{Property} & 
	\multicolumn{2}{|c|}{$\cfrac{\text{Avg. Runtime (s)}}{\text{Unique Attacker}}$} & 
	\multicolumn{2}{|c|}{Unique Attackers Found} \\
	\cline{2-5}
	          & $\exists$ASP & R-$\exists$ASP & $\exists$ASP & R-$\exists$ASP \\
	\hline
	$\phi_1$ & \rem{0.32}\add{0.43}     & \rem{0.49}\add{3.39}       & 7 & 1 \\
	$\phi_2$ & \rem{0.45}\add{0.75}     & \rem{0.48}\add{3.29}       & 4 & 1 \\
	$\phi_3$ & \rem{876.74}\add{122.88} & \rem{2757.98}\add{2469.74} & 3 & 1 \\\hline 
	\end{tabular}
\techreport{
\caption{
\add{Since we use size-1 channels, we allow \textsc{Korg} to configure \textsc{Spin}
with partial order reduction turned on, which improves overall runtime.}
For each property $\phi_i$, we asked \textsc{Korg} to generate
ten attackers with recovery, and ten without\rem{.  
We repeated this experiment ten times}, \add{using} 
\rem{on} a 16Gb 20\rem{18}\add{20} quad-core Intel\copyright \, Core$^{\text{tm}}$ 
i7-\rem{8550U}\add{1185G7} CPU running Linux Mint \rem{19.3}\add{20.1} Cinnamon.
\textsc{Korg} may generate duplicate attackers, so for   
each property (Column~1), 
we list the \rem{average }time taken to generate a unique attacker 
without recovery (Column~2) or with recovery (Column~3), and
the total number of unique attackers found without recovery (Column~4)
or with recovery (Column~5).
For example, for $\phi_3$, out of \rem{100}\add{10} 
attackers with recovery generated over the
course of about \rem{12 hours}\add{41 minutes}, \rem{five were}\add{one was} 
unique and \rem{95}\add{the rest were} duplicates\rem{,
meaning \textsc{Korg} took on average about 15 minutes per unique attacker}.
Instructions and code to reproduce these results are given in the GitHub 
repository.}
}{
\caption{
For each property $\phi_i$, we asked \textsc{Korg} 10 times to generate
10 attackers with recovery, and 10 without, 
on a 16Gb 2018 Intel\copyright \, Core$^{\text{tm}}$ 
i7-8550U CPU running Linux Mint 19.3 Cinnamon.
\textsc{Korg} may generate duplicate attackers, so for   
each property (Column 1), 
we list the average time to generate a unique attacker 
without recovery (Column 2) or with (Column 3), and
the total number of unique attackers found without recovery (Column 4)
or with (Column 5).
E.g., for $\phi_3$, of 100 attackers with recovery generated over about four hours, 
five were unique and 95 duplicates,
so \textsc{Korg} took about 2.3 minutes per attacker, or, 
45 minutes per unique attacker. 
Instructions and code to reproduce these results are given in the GitHub 
repository.}	
}
\label{benchmarksTable}
\techreport{}{\vspace{-1.2em}}
\end{table}

\techreport{

We chose TCP connection establishment for our case study because it is simple and well-understood.  Across three properties (two safety and one liveness), with and without recovery, \textsc{Korg} synthesized attackers exhibiting attack strategies that have worked or could work against some real-world TCP implementations, modulo sequence numbers.  The synthesized attackers are simple, consisting of only a few lines of code, but our TCP model is also simple since we omitted sequence numbers, congestion control, and other details.  Moving forward, we want to apply \textsc{Korg} to more complicated models and discover novel exploits.}{}

\techreport{}{\vspace{-0.5em}}

\section{Related Work}\label{sec:relwork}
\techreport{}{\vspace{-1em}}
Prior works formalized security problems using game theory (e.g., \textsc{FlipIt} \cite{FlipIt}, \cite{klavska2018automatic}), ``weird machines" \cite{USENIX11}, attack trees \cite{ACMCSUR19}, Markov models \cite{valizadeh2019toward}, and other methods.  
Prior notions of attacker quality include $\mathcal{O}$-complexity \cite{LangSec18}, expected information loss \cite{Citeseer13}, or success probability \cite{meira2019synthesis,vasilevskaya2014}, which is similar to our concept of $\forall$ versus $\exists$-attackers.  The formalism of \cite{vasilevskaya2014} also captures attack consequence (cost to a stakeholder).

\techreport{
	Nondeterminism abstracts probability, e.g., a $\forall$-attacker 
	is an attacker with $P(\text{success}) = 1$, and, under fairness conditions, 
	an $\exists$-attacker is an attacker with $0 < P(\text{success}) < 1$.  
	Probabilistic approaches are advantageous when the existence of an
	event is less interesting than its likelihood.
	For example, a lucky adversary \emph{could} randomly guess my RSA modulus, 
	but this attack is too unlikely to be interesting.
	We chose to use nondeterminism over probabilities for two reasons: first, 
	because nondeterministic models do not require prior knowledge of event
	probabilities, but probabilistic models do; and second, because the 
	non-deterministic model-checking problem is cheaper than its probabilistic
	cousin \cite{vardi1999probabilistic}.
	Nevertheless, we believe our approach could be extended to probabilistic
	models in future work.
	Katoen provides a nice survey of probabilistic model checking \cite{katoen2016probabilistic}.
}{}

Attacker synthesis work exists in cyber-physical systems \techreport{\cite{phan2017synthesis,bang2018online,huang2018algorithmic,lin2019synthesis,meira2019synthesis}}{\cite{phan2017synthesis,bang2018online,huang2018algorithmic,meira2019synthesis}}, 
most of which define attacker success using bad states 
(e.g., reactor meltdown, vehicle collision, etc.) 
or information theory 
(e.g., information leakage metrics).
Problems include 
the \emph{actuator attacker synthesis problem} \cite{lin2019synthesisActuator};
the \emph{hardware-aware attacker synthesis problem} \cite{trippel2019security};
and the \emph{fault-attacker synthesis problem} \cite{barthe2014synthesis}.
\techreport{Defensive synthesis also exists~\cite{ardeshiricham2019verisketch}.}{} 

Maybe the most similar work to our own is
\textsc{ProVerif} \cite{BlanchetCSFW01}, which verifies properties of, and 
generates attacks against, cryptographic protocols.  We formalize the problem 
with operational semantics (processes) and reduce it to model checking, whereas 
\textsc{ProVerif} uses axiomatic semantics (\textsc{ProLog} clauses) and reduces 
it to automated proving.
Another similar tool is \textsc{NetSMC} \cite{NetSMC}, 
a model-checker that efficiently finds counter-examples to security properties 
of stateful networks.
\techreport{\textsc{NetSMC} is a model-checker, whereas our approach is built on 
top of a model checker, so in theory our approach could be implemented on top of 
\textsc{NetSMC}.}{}

Existing techniques for automated attack discovery include 
model-guided search \cite{IEEE15,hoque2017analyzing} 
(including using inference \cite{cho2011mace}), 
open-source-intelligence \cite{SemFuzz}, 
bug analysis \cite{huang2012crax}, 
and genetic programming \cite{kayacik2009generating}.  
The generation of a failing test-case for a protocol property 
is not unlike attack discovery, 
so \cite{mcmillan2019formal} is also related.

\techreport{
TCP was previously formally studied using a process language called \textsc{SPEX} \cite{schwabe1981formal}, Petri nets \cite{han2005termination}, the \textsc{HOL} proof assistant \cite{bishop2018engineering}, and various other algebras (see Table~2.2 in \cite{Thesis2016}).  Our model is neither the most detailed nor the most comprehensive, but it captures all possible establishment and tear-down routines, and is tailored to our framework.}{}

This paper focuses on attacker synthesis at the protocol level, and thus
differs from the work reported in~\cite{KangLafortuneTripakisCAV2019}
in two ways:
first, the work in~\cite{KangLafortuneTripakisCAV2019} 
synthesizes mappings between high-level protocol models and execution
platform models, thereby focusing on linking protocol design and
implementation;
second, the work in~\cite{KangLafortuneTripakisCAV2019} synthesizes
correct (secure) mappings, whereas we are interested in synthesizing
attackers.

\techreport{}{\vspace{-1.5em}}

\techreport{\section{Conclusions and Future Work}}{\section{Conclusion}}\label{sec:concl}
\techreport{}{\vspace{-1em}}
We present a novel formal framework for automated attacker synthesis.
The framework includes an explicit definition of threat models and four novel,
to our knowledge,\techreport{ abstract }{ }categories of attackers. We formulate   
four attacker synthesis problems, and propose solutions to two of them by 
program transformations and reduction to model-checking.
We prove our solutions \add{are} sound\add{.}\rem{ and complete; }
\add{We prove our solution to the first problem is complete generally, and that our solution to the second problem is complete over a restricted class of attackers.}
\techreport{\rem{sketches of these proofs}\add{Proofs} are provided in Section~\ref{sec:sol}.}{\rem{these proofs}\add{Proofs} are available online \cite{von2020automated}.}
Finally, we implement our solutions for the case of a centralized attacker in an open-source tool called \textsc{Korg}, apply \textsc{Korg} to the study of the TCP connection routine, and discuss the results. 
\textsc{Korg} and the TCP case study 
are freely and openly 
available\techreport{, so our results are easy to reproduce}
{\footnote{{\url{github.com/maxvonhippel/AttackerSynthesis}}}}.

\section{Erratum}\label{sec:err}
\techreport{}{\vspace{-1em}}
\add{In a previous version of this paper~\cite{hippel2020automated}, 
we proposed an incorrect solution to the R-$\exists$ASP,
which employed a model-checker to find a violating 
run~$r$ described as a lasso~$\alpha \cdot \beta^{\omega}$,
then let each attacker component~$A_i$ consist of
a process reproducing the component of~$\alpha$ in the interface
of~$Q_i$, followed by the recovery~$Q_i$.  
That solution was flawed because it
assumed that at the end of~$\alpha$ (no later and no sooner), all of the
daisies with recovery would recover.
We illustrate how that assumption could fail below.}

\add{Consider the processes $Q$ and $P$ in Fig.~\ref{fig:PQerr}, and let~$\textsc{TM}=(P,(Q),\phi)$ be a threat model where $\phi=\G \neg \texttt{dead}$.
A lasso run $r = \alpha \cdot \beta^{\omega}$ of $P \parallel \textsc{RDaisy}(Q)$ violating $\psi = (\F \texttt{recover}) \implies (\G \neg \texttt{dead})$ is given in Fig.~\ref{eqn:PQerrRun}.  Since at least one such run exists, in both our prior (incorrect) and revised (correct) solutions, $R \neq \emptyset$.}

\begin{figure}[H]
\centering
\begin{tikzpicture}
\node[] (empty) at (-2, 0) {};
\node[draw,circle] (q0) at (0, 0) {$q_0$};
\node[draw,circle] (q1) at (2, 1) {$q_1$};
\node[draw,circle] (q2) at (2,-1) {$q_2$};
\draw[straight] (empty) to (q0);
\draw[straight] (q0) to node[above left,rotate=30,xshift=0.6cm] 
{$\texttt{BAD}?$} (q1);
\draw[straight] (q0) to node[below left,rotate=-30,xshift=0.7cm] 
{$\texttt{GOOD}?$} (q2);
\draw[looped] (q1) to[out=north east,in=south east,looseness=5] node[right] {\texttt{DOBAD}$!$} (q1);
\draw[looped] (q2) to[out=north east,in=south east,looseness=5] node[right] {\texttt{TRIGGER}$!$} (q2);
\node[] (emptyP) at (7, 0) {};
\node[draw,circle] (p0) at (9, 0) {$p_0$};
\node[draw,circle] (p1) at (9,-2) {$p_1$};
\node[draw,circle] (p2) at (13,-1) {$p_2$};
\node[draw,rounded rectangle] (p3) at (13, 1) {$p_4 : \{ \texttt{dead} \}$};
\draw[straight] (emptyP) to (p0);
\draw[straight] (p0) to node[above,rotate=15] {\texttt{DOBAD}$?$} (p3);
\draw[straight] (p0) to node[below,rotate=-15] {\texttt{GOOD}$!$} (p2);
\draw[looped] (p0) to[out=south west,in=north west,looseness=1]
    node[left] {\texttt{TRIGGER}$?$} (p1);
\draw[looped] (p1) to[out=north east,in=south east,looseness=1]
    node[left] {\texttt{BAD}$!$} (p0);
\draw[looped] (p3) to[out=north west,in=north east,looseness=2]
    node[above] {\texttt{DOBAD}$?$} (p3);
\draw[looped] (p2) to[out=south west,in=south east,looseness=5]
    node[below] {\texttt{TRIGGER}$?$} (p2);
\end{tikzpicture}
\caption{Left: the process $Q$.  Notice that $Q$ is deterministic.  Right: the process $P$.  Observe that $P \parallel Q \models \G \neg \texttt{dead}$.}
\label{fig:PQerr}
\end{figure}

\begin{figure}[H]
\[
\underbrace{(p_0, d_0) \xrightarrow[]{\texttt{TRIGGER}!}
(p_1, q_0) \xrightarrow[]{\texttt{BAD}!}
(p_0, q_1) \xrightarrow[]{\texttt{DOBAD}!}}_{\alpha}
\underbrace{(p_4, q_1) \xrightarrow[]{\texttt{DOBAD}!} (p_4, q_1)
    \xrightarrow[]{\texttt{DOBAD}!} \dots}_{\beta^{\omega}}
\]
\caption{An example (lasso) run $r$ of $P \parallel \textsc{RDaisy}(Q)$ violating $\psi = (\F \texttt{recover}) \implies (\G \neg \texttt{dead})$.}
\label{eqn:PQerrRun}
\end{figure}

\add{Our prior (incorrect) solution to the \textsc{R$\exists$ASP} would produce the following process,~$A$.}

\begin{figure}[H]
\centering
\begin{tikzpicture}
\node[] (empty) at (-10, 0) {};
\node[draw,circle] (a0) at (-8, 0) {$a_0$};
\node[draw,circle] (a1) at (-6, 0) {$a_1$};
\node[draw,circle] (a2) at (-4, 0) {$a_2$};
\draw[straight] (empty) to (a0);
\node[draw,rounded rectangle] (q0) at (0, 0) 
    {$q_0 : \{ \texttt{recover} \}$};
\draw[looped] (a0) to[out=north east,in=north west,looseness=1] 
    node[above] {\texttt{TRIGGER}$!$} (a1);
\draw[looped] (a1) to[out=north east,in=north west,looseness=1] 
    node[above] {\texttt{BAD}$?$} (a2);
\draw[straight] (a2) to
    node[above] {\texttt{DOBAD}$!$} (q0);
\node[draw,circle] (q1) at (2, 1) {$q_1$};
\node[draw,circle] (q2) at (2,-1) {$q_2$};
\draw[straight] (q0) to node[above left,rotate=30,xshift=0.6cm] 
{$\texttt{BAD}?$} (q1);
\draw[straight] (q0) to node[below left,rotate=-30,xshift=0.7cm] 
{$\texttt{GOOD}?$} (q2);
\draw[looped] (q1) to[out=north east,in=south east,looseness=5] node[right] {\texttt{DOBAD}$!$} (q1);
\draw[looped] (q2) to[out=north east,in=south east,looseness=5] node[right] {\texttt{TRIGGER}$!$} (q2);
\end{tikzpicture}
\caption{The ``attacker'' component produced by our prior (incorrect) solution given the example threat model.}
\end{figure}

\add{But critically,~$P \parallel A$ has no runs,
hence~$A$ is not actually a \textsc{TM}-attacker.
The problem is that we created the attacker program by 
    projecting the prefix~$\alpha$ onto the interface of~$Q$, 
    and gluing the result directly to the recovery~$Q$.
Instead, we should have projected the prefix of~$r$ up until \texttt{recover},
    and glued that result to recovery.
This is exactly what we do in our updated solution, in this case, yielding the  terminating attacker $A'$ illustrated in Fig.~\ref{fig:fixA}.  The attacked system~$P \parallel A'$ yields the run \[r' = (p_0,a_0') \xrightarrow[]{\texttt{TRIGGER}!}
(p_1,q_0) \xrightarrow[]{\texttt{BAD}!}
(p_0,q_1) \xrightarrow[]{\texttt{DOBAD}!}
(p_4,q_1) \xrightarrow[]{\texttt{DOBAD}!}
(p_4,q_1) \xrightarrow[]{\texttt{DOBAD}!} \dots\] which violates $\psi = (\F \texttt{recover}) \implies (\G \neg \texttt{dead})$.}

\begin{figure}[H]
\centering
\begin{tikzpicture}
\node[] (empty) at (-6, 0) {};
\node[draw,circle] (a0) at (-4, 0) {$a_0'$};
\draw[straight] (empty) to (a0);
\draw[straight] (a0) to node[above] {\texttt{TRIGGER}$!$} (q0);
\node[draw,rounded rectangle] (q0) at (0, 0) {$q_0 : \{ \texttt{recover} \}$};
\node[draw,circle] (q1) at (2, 1) {$q_1$};
\node[draw,circle] (q2) at (2,-1) {$q_2$};
\draw[straight] (q0) to node[above left,rotate=30,xshift=0.6cm] 
{$\texttt{BAD}?$} (q1);
\draw[straight] (q0) to node[below left,rotate=-30,xshift=0.7cm] 
{$\texttt{GOOD}?$} (q2);
\draw[looped] (q1) to[out=north east,in=south east,looseness=5] node[right] {\texttt{DOBAD}$!$} (q1);
\draw[looped] (q2) to[out=north east,in=south east,looseness=5] node[right] {\texttt{TRIGGER}$!$} (q2);
\end{tikzpicture}
\caption{The corrected attacker component~$A'$ produced by our updated solution given the example threat model~\textsc{TM}.}
\label{fig:fixA}
\end{figure}

\techreport{\medskip}{}

\techreport{\noindent\rule{\textwidth}{1pt}}{}

{\footnotesize 
\noindent\textbf{Acknowledgments} This material is based upon work supported by the National Science Foundation under NSF SaTC award CNS-1801546. Any opinions, findings, and conclusions or recommendations expressed in this material are those of the author(s) and do not necessarily reflect the views of the National Science Foundation.  The authors thank four anonymous reviewers.
Additionally, the first author thanks Benjamin Quiring, Dr. Ming Li, and Dr. Frank von Hippel.

\techreport{\noindent\rule{\textwidth}{1pt}}{\vspace{-1em}}

\bibliographystyle{splncs04}
\bibliography{attackSynthesis}

\end{document}